\newtheorem{Theorem}{Theorem}[section]
\newtheorem{Lemma}[Theorem]{Lemma}
\newtheorem{Definition}[Theorem]{Definition}
\newtheorem{Corollary}[Theorem]{Corollary}
\newtheorem{Remark}[Theorem]{Remark}
\begin{document}
%

\title{Sparse Phase Retrieval with Redundant Dictionary via $\ell_q (0<q\le 1)$-Analysis Model\thanks{This work was supported in part by the National Natural Science Foundation of China under Grants 12261059 and 62202442, in part by the Jiangxi Provincial Natural Science Foundation under Grant 20224BAB211001, and in part by the Anhui Provincial Natural Science Foundation under Grant 2208085QF188.}}

\author{Haiye Huo$^a$ and Li Xiao$^{b,}$\thanks{Corresponding author.}\\
$^a$ Department of Mathematics, School of Mathematics and Computer Sciences, \\ Nanchang University, Nanchang~330031, Jiangxi, China\\
\mbox{} \\
$^b$ Department of Electronic Engineering and Information Science, University of \\Science and Technology of China, Hefei~230052, China\\
\normalsize{Emails: hyhuo@ncu.edu.cn; xiaoli11@ustc.edu.cn}
}

\date{}
\maketitle

\textbf{Abstract:}\,\,
Sparse phase retrieval with redundant dictionary is to reconstruct the signals of interest that are (nearly) sparse in a redundant dictionary or
frame from the phaseless measurements via the optimization models. Gao \cite{Gao2017} presented conditions on the measurement matrix, called null space property (NSP) and strong dictionary restricted isometry property (S-DRIP), for exact and stable recovery of dictionary-$k$-sparse signals via the $\ell_1$-analysis model for sparse phase retrieval with redundant dictionary, respectively, where, in particularly, the S-DRIP of order $tk$ with $t>1$ was derived. In this paper, motivated by many advantages of the $\ell_q$ minimization with $0<q\leq1$, e.g., reduction of the number of measurements required, we generalize these two conditions to the $\ell_q$-analysis model. Specifically, we first present two NSP variants for exact recovery of dictionary-$k$-sparse signals via the $\ell_q$-analysis model in the noiseless scenario. Moreover, we investigate the S-DRIP of order $tk$ with $0<t<\frac{4}{3}$ for stable recovery of dictionary-$k$-sparse signals via the $\ell_q$-analysis model in the noisy scenario, which will complement the existing result of the S-DRIP of order $tk$ with $t\geq2$ obtained in \cite{CH22}.

\textbf{Keywords:}\,\,
Compressed sensing, phase retrieval, $\ell_q$-analysis model, redundant dictionary.

\textbf{Mathematics Subject Classification:}\,\,
41A27, 94A12
\section{Introduction}\label{sec:S0}
The problem of recovering a signal from phaseless measurements is known as phase retrieval \cite{Walter1963}. It has emerged in quantum mechanics \cite{LR2009}, optics \cite{SEC2015}, and X-ray crystallography \cite{Mill90}, etc. In general form, the problem of phase retrieval is to reconstruct an original unknown signal $x_{0}\in \mathbb{H}^{n} (\mathbb{H}=\mathbb{R}$ or $\mathbb{C})$ from its magnitude only measurements
\begin{equation}\label{model0}
|Ax|=|Ax_0|+e,
\end{equation}
where $A=[a_1,a_2,\cdots,a_m]^{*}\in\mathbb{H}^{m\times n}$ denotes a measurement matrix with $Ax_0=[\langle a_1, x_0\rangle,\\
\langle a_2,x_0\rangle,\cdots,\langle a_m, x_0\rangle]^{*}$, $e\in \mathbb{H}^{m}$ denotes the noise, and $*$ and $\lvert\cdot\rvert$ denote the conjugate transpose and the absolute value operations, respectively. It is worth noting that the phase retrieval setup (\ref{model0}) will inevitably lead to ambiguous solutions; that is, the global phase shift, conjugate inversion, and spatial shift of any solution to (\ref{model0}) are solutions as well. Therefore, these trivial ambiguities are considered acceptable and signal recovery in phase retrieval is referred to as reconstructing a signal up to a unimodular constant (or a global phase).

With the development of compressed sensing \cite{HSX2018,Sun2010,Wan20,Xie23}, a great number of studies have focused on the so-called sparse phase retrieval by incorporating signal sparsity as prior information into the standard phase retrieval problem \cite{Huo22,ZSZ22,ZY20}. Specifically, a variety of sufficient and/or necessary conditions on the measurement matrix for exact and stable recovery of $k$-sparse signals in compressed sensing have been extended to the sparse phase retrieval problem \cite{CJL22,CH22,YHW17}. For example, Wang and Xu \cite{WX2014} introduced a sufficient and necessary condition (namely the null space property (NSP)) for exact recovery of sparse signals from their magnitude only measurements. The restricted isometry property (RIP) was correspondingly generalized as the strong RIP (S-RIP), which is a sufficient condition to ensure stable recovery of a sparse signal in sparse phase retrieval \cite{VX2016,GWX2016}. We refer the reader to \cite{GWX2016,WX2014,VX2016} for more details on sparse phase retrieval.

Recently, considering that the signals of interest (e.g., radar images \cite{HS2009}) are generally not sparse in the standard orthonormal basis but have sparse representations in a redundant dictionary or frame, many researchers have began to pay attention to sparse phase retrieval for such dictionary-sparse signals \cite{CL16}. Specifically, let $D\in \mathbb{H}^{n\times N}$ with $n<N$ be a redundant dictionary. Assume that an unknown signal $x_0\in \mathbb{H}^n$ is sparse in the redundant dictionary $D$, i.e., there exists a sparse vector $z_0\in \mathbb{H}^N$ such that $x_0=Dz_0$. The problem of recovering a dictionary-sparse signal $x_0$ from the noisy measurements
\begin{equation}\label{Dx}
|ADz|=|ADz_0|+e
\end{equation}
is called sparse phase retrieval with redundant dictionary. Without loss of generality, the redundant dictionary $D$ is supposed to be a tight frame. A signal $z\in \mathbb{H}^N$ is said to be $k$-sparse, if $\|z\|_{0}\le k$, i.e., there exist at most $k$ nonzero entries in $z$. Let
\begin{equation}\label{xin1}
\mathbb{H}_{k}^{N}:=\{z\in \mathbb{H}^N:\;\|z\|_{0}\le k\}
\end{equation}
be the set of $k$-sparse signals in $\mathbb{H}^N$. Accordingly, define
\begin{equation}\label{xin2}
D\mathbb{H}_{k}^{N}:=\{x\in \mathbb{H}^n:\; \mbox{there exists one}\; z\in\mathbb{H}_{k}^{N} \mbox{such that}\; x=Dz\}
\end{equation}
as the set of dictionary-$k$-sparse signals (with respect to $D$) in $\mathbb{H}^N$. The $\ell_1$-analysis model was proposed in \cite{Gao2017} to reconstruct a dictionary-$k$-sparse $x_0$, i.e.,
\begin{equation}\label{mod:l1}
\min\|D^*x\|_{1} \quad{\mbox{subject to}}\quad\||Ax|-|Ax_0|\|_{2}^2\le\epsilon^2,
\end{equation}
where $\epsilon$ stands for the upper bound for the noise $e$. For the noiseless scenario (i.e., $e=0$ or $\epsilon=0$), Gao \cite{Gao2017} analyzed the null space of the measurement matrix and presented two novel necessary and sufficient conditions (i.e., the NSP) for exact recovery of real and complex dictionary-$k$-sparse signals via the $\ell_1$-analysis model (\ref{mod:l1}), respectively. Furthermore, for the noisy scenario, the strong dictionary RIP (S-DRIP) of order $tk$ with $t>1$ was provided to be a sufficient condition for stable recovery of real dictionary-$k$-sparse signals via the $\ell_1$-analysis model (\ref{mod:l1}) in \cite{Gao2017}. More recently, motivated by the advantages of the non-convex $\ell_q$-minimization with $q\in(0,1]$ \cite{Lai2011}, e.g., reduction of the number of measurements required, Cao and Huang \cite{CH22} investigated the $\ell_q$-analysis model, i.e.,
\begin{equation}\label{mod:lq}
\min\|D^*x\|_{q}^q \quad{\mbox{subject to}}\quad\||Ax|-|Ax_0|\|_{2}^2\le\epsilon^2,
\end{equation}
and derived the S-DRIP of order $tk$ with $t\geq2$ to guarantee stable recovery of real dictionary-$k$-sparse signals via the $\ell_q$-analysis model (\ref{mod:lq}).

To complement the results obtained in \cite{Gao2017,CH22} for sparse phase retrieval with redundant dictionary, we first extend the NSP as necessary and sufficient conditions under which (real and complex) dictionary-$k$-sparse signals can be accurately recovered via the $\ell_q$-analysis model (\ref{mod:lq}) in the noiseless scenario. In addition, we introduce the S-DRIP of order $tk$ with $0<t<\frac{4}{3}$ for stable recovery of real dictionary-$k$-sparse signals via the $\ell_q$-analysis model (\ref{mod:lq}) in the noisy scenario.

The rest of this paper is organized as follows. In Section~\ref{sec:S2}, we give some notations and recall some results which are quite useful for later proofs. In Sections~\ref{sec:S3} and \ref{sec:S4}, we present the NSP conditions for exact recovery of dictionary-$k$-sparse signals in the noiseless scenario, and the S-DRIP of order $tk$ with $0<t<\frac{4}{3}$ for stable recovery of real dictionary-$k$-sparse signals in the noisy scenario, via the $\ell_q$-analysis model (\ref{mod:lq}), respectively. Finally, we conclude this paper in Section~\ref{sec:S6}.

\section{Preliminaries}\label{sec:S2}
A set of vectors $\{d_i\in \mathbb{H}^n:i=1,2,\cdots,N\}$ is said to be a frame for $\mathbb{H}^n$, if there exist two constants $0<s\le t<\infty$ such that for all $f\in \mathbb{H}^n$,
\begin{equation}\label{frame}
s\|f\|_2^2\le\sum_{i=1}^{N}|\langle f,d_i\rangle|^2\le t\|f\|_2^2.
\end{equation}
If $s=t$, the frame is called a tight frame. Throughout this paper, without loss of generality, assume that $D=[d_1,d_2,\cdots,d_N]\in\mathbb{H}^{n\times N}$ is a tight frame obeying the Parseval relations
\begin{equation}
f=\sum_{j}\langle f,d_i\rangle d_i \;\mbox{ and }\; \|f\|_2^2=\sum_{i}|\langle f,d_i\rangle|^2,
\end{equation}
where $\{d_i\}_{i=1}^N\in \mathbb{H}^n$ are the $N$ columns of $D$. As such, it can be easily proved that for all $f\in \mathbb{H}^n$,
\begin{equation}
DD^*=I_n \;\mbox{ and }\; \|f\|_2^2=\|D^{*}f\|_2^2,
\end{equation}
where $I_n$ stands for the identity matrix of size $n\times n$.
Let $D^{\bot}$ denote the orthonormal complement of $D$. It is known that $D^{\bot}$ is a tight frame as well, and for all $z\in \mathbb{H}^N$, we have
\begin{equation}\label{xin0}
\|z\|_2^2=\|Dz\|_2^2+\|D^{\bot}z\|_2^2.
\end{equation}
For clarity of definitions in (\ref{xin1}), we use $\mathbb{R}_{k}^{N}$ and $\mathbb{C}_{k}^{N}$ to denote the set of $k$-sparse real signals in the real case (i.e., $\mathbb{H}=\mathbb{R}$) and the set of $k$-sparse complex signals in the complex case (i.e., $\mathbb{H}=\mathbb{C}$), respectively. Analogously for (\ref{xin2}), both $D\mathbb{R}_{k}^{N}$ and $D\mathbb{C}_{k}^{N}$ can be correspondingly defined.
The best $k$-term approximation error (with respect to the $\ell_q$ norm) of a vector $z\in \mathbb{H}^N$ is denoted by
\begin{equation}\label{xin3}
\sigma_{k}(z)_q:=\inf_{z^{\prime}\in\mathbb{H}_{k}^{N}}\|z^{\prime}-z\|_{q}.
\end{equation}

Define $[1:m]:=\{1,2,\cdots,m\}$. Let $T$ be a subset of $[1:m]$ with $|T|$ denoting its cardinal number, and $T^{c}:=[1:m]\backslash T$ be the complement of $T$. Let $A_{T}:=[a_i,\; i\in T]^*\in\mathbb{H}^{|T|\times n}$ be a sub-matrix of $A=[a_1,a_2,\cdots,a_m]^{*}\in\mathbb{H}^{m\times n}$, whose rows are the rows of $A$ with indices in $T$. We next recall the definitions of D-RIP \cite{CEN11} and S-DRIP \cite{Gao2017} in the real case (i.e., $\mathbb{H}=\mathbb{R}$), respectively, as follows.

\begin{Definition}\cite[D-RIP]{CEN11}\label{Def:DRIP}
For a tight frame $D\in\mathbb{R}^{n\times N}$ and a measurement matrix $A\in \mathbb{R}^{m\times n}$, we say that $A$ satisfies the restricted isometry property adapted to $D$ (D-RIP for short) of order $k$ with a constant $\delta_k$, if
\begin{equation}\label{Def:DRIP:1}
(1-\delta_k)\|Dz\|_{2}^2\le\|ADz\|_2^2\le(1+\delta_k)\|Dz\|_2^2
\end{equation}
holds for all $k$-sparse vectors $z\in \mathbb{R}_{k}^{N}$.
\end{Definition}

Note that if the measurement matrix $A$ satisfies the D-RIP in Definition \ref{Def:DRIP}, according to (\ref{xin0}), we readily have, for all $z\in\mathbb{R}_{k}^N$,
\begin{equation}\label{repl}
(1-\delta_k)\|z\|_2^2\le\|ADz\|_2^2+\|D^{\bot}z\|_2^2\le(1+\delta_k)\|z\|_2^2.
\end{equation}

\begin{Definition}\cite[S-DRIP]{Gao2017}\label{Def:SDRIP}
For a tight frame $D\in\mathbb{R}^{n\times N}$ and a measurement matrix $A\in \mathbb{R}^{m\times n}$, we say that $A$ satisfies the strong restricted isometry property adapted to $D$ (S-DRIP for short) of order $k$ with constants $\theta_{-},\;\theta_{+}\in (0,2)$, if
\begin{align}\label{Def:SDRIP:1}
\theta_{-}\|Dv\|_{2}^2&\le\min_{I\subseteq[1:m],\;|I|\geq \frac{m}{2}}\|A_IDv\|_2^2\le\max_{I\subseteq[1:m],\;|I|\geq \frac{m}{2}}\|A_{I}Dv\|_2^2\le\theta_{+}\|Dv\|_2^2
\end{align}
holds for all $k$-sparse vectors $v\in\mathbb{R}^{N}$.
\end{Definition}

Finally, we introduce some important lemmas, which will be needed later.

\begin{Lemma}\cite[Lemma 1]{ZL18}\label{lem1}
Let $T$ be an index set with $|T|=k$. Given vectors $\{v_i:i\in T\}$, choosing all subsets $T_i\subset T$ satisfying $|T_i|=l,i\in I$ with $|I|=\left(
                                                                                                                                                                                                                    \begin{array}{c}
                                                                                                                                                                                                                      k \\
                                                                                                                                                                                                                      l \\
                                                                                                                                                                                                                    \end{array}
                                                                                                                                                                                                                  \right)
$, we have
\begin{equation}\label{lem1:1}
\sum_{i\in I}\sum_{p\in T_i}v_p=\left(\begin{array}{c}
                                                                                                                                                                                                                      k-1 \\
                                                                                                                                                                                                                      l-1 \\
                                                                                                                                                                                                                    \end{array}
                                                                                                                                                                                                                  \right)
\sum_{p\in T}v_p\;\;\;(l\ge 1),
\end{equation}
and
\begin{equation}\label{lem1:2}
\sum_{i\in I}\sum_{p\ne q\in T_i}\langle v_p,v_q\rangle=\left(\begin{array}{c}
                                                                                                                                                                                                                      k-2 \\
                                                                                                                                                                                                                      l-2 \\
                                                                                                                                                                                                                    \end{array}
                                                                                                                                                                                                                  \right)
\sum_{p\ne q}\langle v_p,v_q\rangle\;\;\;(l\ge 2).
\end{equation}

\end{Lemma}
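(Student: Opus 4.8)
The plan is to prove both identities by the standard double-counting technique of interchanging the order of summation: rather than first fixing a subset $T_i$ and then summing the contributions of its elements, I would first fix an element (or a pair of elements) of $T$ and count how many of the $\binom{k}{l}$ subsets $T_i$ contain it.

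For (\ref{lem1:1}), the key observation is that each vector $v_p$ with $p\in T$ is counted once for every subset $T_i$ of size $l$ that contains the index $p$. Since such a subset is determined by choosing the remaining $l-1$ indices from the $k-1$ indices in $T\setminus\{p\}$, there are exactly $\binom{k-1}{l-1}$ such subsets. Interchanging the two summations therefore gives
\begin{equation*}
\sum_{i\in I}\sum_{p\in T_i}v_p=\sum_{p\in T}\#\{i\in I:\,p\in T_i\}\,v_p=\binom{k-1}{l-1}\sum_{p\in T}v_p,
\end{equation*}
which is (\ref{lem1:1}). The hypothesis $l\geq 1$ ensures that a size-$l$ subset can contain the fixed index $p$, so the count $\binom{k-1}{l-1}$ is meaningful.

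For (\ref{lem1:2}), the same idea applies to ordered pairs. For a fixed pair $(p,q)$ with $p\neq q$ in $T$, a subset $T_i$ of size $l$ contributes the term $\langle v_p,v_q\rangle$ exactly when both $p$ and $q$ lie in $T_i$; choosing the remaining $l-2$ indices from the $k-2$ indices in $T\setminus\{p,q\}$ shows there are $\binom{k-2}{l-2}$ such subsets. Interchanging summations then yields
\begin{equation*}
\sum_{i\in I}\sum_{p\neq q\in T_i}\langle v_p,v_q\rangle=\sum_{p\neq q\in T}\binom{k-2}{l-2}\langle v_p,v_q\rangle=\binom{k-2}{l-2}\sum_{p\neq q\in T}\langle v_p,v_q\rangle,
\end{equation*}
which is (\ref{lem1:2}); here the condition $l\geq 2$ guarantees that a size-$l$ subset can contain both fixed indices.

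I do not expect any genuine obstacle in this argument, since it reduces entirely to the elementary counting fact that the number of $l$-subsets of a $k$-set containing a fixed $j$-subset equals $\binom{k-j}{l-j}$ (with $j=1$ and $j=2$, respectively). The only point requiring a little care is the bookkeeping of the interchange of summation order and the verification that the binomial-coefficient counts are valid under the stated ranges of $l$; once these are settled, both identities follow immediately.
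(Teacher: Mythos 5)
Your proof is correct. The paper does not actually prove this lemma itself---it imports it verbatim as \cite[Lemma 1]{ZL18}---and your double-counting argument (fix an index $p$, respectively a pair $p\neq q$, and count the $\binom{k-1}{l-1}$, respectively $\binom{k-2}{l-2}$, size-$l$ subsets of $T$ containing it, then interchange the order of summation) is precisely the standard proof of this identity as given in that reference.
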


\begin{Lemma}\cite[Lemma 2.2]{ZL19}\label{lem2}
For any $x$ in the set
\[
V=\{x\in \mathbb{R}^n:\|x\|_{0}=r,\;\|x\|_q^q\le k\alpha^q,\;\|x\|_{\infty}\le\alpha\},
\]
where $k\le r$ is a positive integer, $\alpha$ is a positive constant, and $0<q\le 1$,
it can be represented by a convex combination of $k$-sparse vectors, i.e.,
\[
x=\sum_{i}\lambda_iu_i,
\]
where $\lambda_i>0,\;\sum_{i}\lambda_i=1,$ and $\|u_i\|_0\le k$. In addition,
\[
\sum_{i}\lambda_i\|u_i\|_2^2\le\min\left\{\frac{r}{k}\|x\|_2^2,\alpha^q\|x\|_{2-q}^{2-q}\right\}.
\]
\end{Lemma}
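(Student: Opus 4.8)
The plan is to realize $x$ as a convex combination of $k$-sparse vectors through one flexible construction, and then specialize it twice to extract the two competing upper bounds. Write $S=\mathrm{supp}(x)$, so $|S|=r$, and fix any vector of ``activation fractions'' $p=(p_j)_{j\in S}$ satisfying $0<p_j\le 1$ and $\sum_{j\in S}p_j\le k$. First I would associate to such a $p$ the polytope $Q=\{y\in\mathbb{R}^n:\ \mathrm{supp}(y)\subseteq S,\ 0\le y_j\le 1,\ \sum_j y_j\le k\}$, which plainly contains $p$. The decisive structural fact is that, \emph{because $k$ is a positive integer}, every vertex of $Q$ is a $0/1$ vector with at most $k$ ones: a vertex makes $|S|$ independent facet constraints tight, so all but at most one coordinate lies in $\{0,1\}$, and if the inequality $\sum_j y_j\le k$ is active then the last coordinate is forced to be integral as well. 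By the Minkowski--Carath\'eodory theorem I then obtain $p=\sum_i\lambda_i w_i$ with $\lambda_i>0$, $\sum_i\lambda_i=1$, and each $w_i\in\{0,1\}^n$ obeying $\|w_i\|_0\le k$.

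Next I would lift this combinatorial decomposition to $x$. Setting $c_j=|x_j|/p_j$ for $j\in S$ and defining $u_i$ coordinatewise by $(u_i)_j=\mathrm{sgn}(x_j)\,c_j\,(w_i)_j$, each $u_i$ inherits $\|u_i\|_0=\|w_i\|_0\le k$. Since $\sum_i\lambda_i(w_i)_j=p_j$, one checks $\sum_i\lambda_i(u_i)_j=\mathrm{sgn}(x_j)\,c_j p_j=x_j$, so $x=\sum_i\lambda_i u_i$ is the desired convex combination of $k$-sparse vectors. Moreover, using $(w_i)_j\in\{0,1\}$ so that $(w_i)_j^2=(w_i)_j$, a short computation yields the clean identity $\sum_i\lambda_i\|u_i\|_2^2=\sum_{j\in S}c_j^2 p_j=\sum_{j\in S}|x_j|^2/p_j$, which reduces the entire problem to choosing an admissible profile $p$ that makes $\sum_{j\in S}|x_j|^2/p_j$ small.

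Finally I would produce the two bounds by two admissible choices of $p$. Taking $p_j=k/r$, which is feasible because $k\le r$ forces $p_j\le 1$ and $\sum_j p_j=k$, gives $\sum_{j\in S}|x_j|^2/p_j=\frac{r}{k}\|x\|_2^2$; taking $p_j=(|x_j|/\alpha)^q$, which is feasible because $|x_j|\le\alpha$ gives $p_j\le 1$ and $\sum_j p_j=\alpha^{-q}\|x\|_q^q\le k$ by hypothesis, gives $\sum_{j\in S}|x_j|^2(\alpha/|x_j|)^{q}=\alpha^q\|x\|_{2-q}^{2-q}$. Since both profiles are admissible, I am free to retain whichever decomposition delivers the smaller value (equivalently, to use the minimizing profile, which is dominated by both), and in every case $\sum_i\lambda_i\|u_i\|_2^2\le\min\{\frac{r}{k}\|x\|_2^2,\ \alpha^q\|x\|_{2-q}^{2-q}\}$, as claimed.

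The hard part will not be the bookkeeping but the second bound: the naive route of discarding the $\ell_q$ information and reducing to the Cai--Zhang $\ell_1$-polytope via $\|x\|_1\le\alpha^{1-q}\|x\|_q^q\le k\alpha$ only yields $\sum_i\lambda_i\|u_i\|_2^2\le\alpha\|x\|_1$, which is strictly \emph{weaker} than $\alpha^q\|x\|_{2-q}^{2-q}$ whenever $q<1$. The genuine content is therefore the general identity $\sum_i\lambda_i\|u_i\|_2^2=\sum_{j\in S}|x_j|^2/p_j$ together with the non-obvious choice $p_j=(|x_j|/\alpha)^q$; the supporting technical engine is the vertex characterization of $Q$, which is the sole place the integrality of $k$ is used and which I expect to require the most care to state precisely.
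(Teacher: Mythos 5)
Your proof is correct; every step checks out. The vertex characterization of $Q$ is sound (at a vertex all but at most one coordinate is pinned to $\{0,1\}$ by active box constraints, and if the budget constraint $\sum_j y_j\le k$ is active, the integrality of $k$ forces the remaining coordinate into $\{0,1\}$ as well); the lifting $(u_i)_j=\mathrm{sgn}(x_j)\,(|x_j|/p_j)\,(w_i)_j$ reproduces $x$ and gives the identity $\sum_i\lambda_i\|u_i\|_2^2=\sum_{j\in S}|x_j|^2/p_j$ because $(w_i)_j^2=(w_i)_j$; and the two profiles $p_j=k/r$ and $p_j=(|x_j|/\alpha)^q$ are feasible and achieve, with equality, the two terms of the minimum, so retaining the decomposition with the smaller value does yield a single decomposition satisfying the stated bound. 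Be aware, however, that there is nothing in this paper to compare your argument against: Lemma \ref{lem2} is imported verbatim from \cite{ZL19} (Lemma 2.2 there) and is used as a black box, with no proof given here. Relative to the argument in the cited source --- which, in the tradition of Cai and Zhang's sparse representation of a polytope, builds the convex combination by an explicit iterative/inductive construction on the support of $x$ --- your route is genuinely different: encoding both bounds as two feasible activation profiles of a single integral polytope lets Carath\'eodory do all the combinatorial work, isolates the sole use of the integrality of $k$, and would produce further bounds from other admissible choices of $p$. The trade-off is that the constructive approach is elementary and self-contained, whereas yours leans on polyhedral facts (a bounded polytope is the convex hull of its vertices, plus the vertex characterization of $Q$); if your proof is to stand alone, that characterization should be stated and proved as a separate claim exactly as you sketched it.
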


\begin{Lemma}\cite[Lemma 5.3]{CZ13}\label{Lem:Add1}
Let $r\ge k,$\;$b_1\ge b_2\ge\cdots\ge b_r\ge 0,$\;$d\ge 0$, and $\sum_{i=1}^{k}b_i+d\ge\sum_{i=k+1}^{r}b_i$. Then, for all $\omega\ge 1$,
\begin{equation}
\sum_{i=k+1}^{r}b_i^{\omega}\le k\left[\left(\frac{1}{k}\sum_{i=1}^{k}b_i^{\omega}\right)^{\frac{1}{\omega}}+\frac{d}{k}\right]^{\omega}.
\end{equation}
\end{Lemma}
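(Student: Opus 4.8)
The plan is to reduce everything to the single quantity $\alpha := \left(\frac{1}{k}\sum_{i=1}^{k}b_i^{\omega}\right)^{1/\omega}$, which is exactly the object appearing inside the bracket on the right-hand side, and then to control the tail values $b_{k+1},\dots,b_r$ both in magnitude and in aggregate in terms of $\alpha$. Two elementary observations drive the argument. First, since $b_1\ge\cdots\ge b_k$, the smallest of the leading $k$ terms satisfies $b_k^{\omega}\le\frac{1}{k}\sum_{i=1}^{k}b_i^{\omega}=\alpha^{\omega}$, so $b_k\le\alpha$; by monotonicity every tail term then obeys $b_i\le b_k\le\alpha$ for $i>k$. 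Second, the power-mean (equivalently Jensen) inequality for the convex map $t\mapsto t^{\omega}$ with $\omega\ge1$ gives $\frac{1}{k}\sum_{i=1}^{k}b_i\le\alpha$, i.e. $\sum_{i=1}^{k}b_i\le k\alpha$. Feeding this into the hypothesis $\sum_{i=k+1}^{r}b_i\le\sum_{i=1}^{k}b_i+d$ yields the crucial sum bound $\sum_{i=k+1}^{r}b_i\le k\alpha+d$.

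With these in hand I would estimate the tail directly. Writing $b_i^{\omega}=b_i^{\omega-1}\,b_i$ and using $b_i\le\alpha$ together with $\omega-1\ge0$ gives $b_i^{\omega}\le\alpha^{\omega-1}b_i$, hence
\[
\sum_{i=k+1}^{r}b_i^{\omega}\le\alpha^{\omega-1}\sum_{i=k+1}^{r}b_i\le\alpha^{\omega-1}(k\alpha+d)=k\alpha^{\omega}+\alpha^{\omega-1}d.
\]
It remains to check that this last expression does not exceed the claimed bound $k(\alpha+d/k)^{\omega}$. Here convexity enters once more: the tangent-line (Bernoulli) inequality for $t\mapsto t^{\omega}$ at the point $\alpha$ gives $(\alpha+d/k)^{\omega}\ge\alpha^{\omega}+\omega\alpha^{\omega-1}(d/k)$, so that $k(\alpha+d/k)^{\omega}\ge k\alpha^{\omega}+\omega\alpha^{\omega-1}d\ge k\alpha^{\omega}+\alpha^{\omega-1}d$, the final step using $\omega\ge1$ and $d\ge0$. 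Chaining the two displays closes the proof, with the degenerate case $\alpha=0$ (which forces all $b_i=0$ against a nonnegative right-hand side) handled separately.

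The steps are individually routine, so the only real subtlety—what I would treat as the main obstacle—is recognizing which two scalar facts to pair: one must bound the tail both in magnitude (each $b_i\le\alpha$, from monotonicity) and in aggregate (sum $\le k\alpha+d$, from the power-mean estimate plus the hypothesis), since either bound used alone is too weak to produce the $(\alpha+d/k)^{\omega}$ form. Once the tail is collapsed to $k\alpha^{\omega}+\alpha^{\omega-1}d$, the matching lower bound for $k(\alpha+d/k)^{\omega}$ is forced by convexity, and the factor $\omega\ge1$ is precisely what absorbs the coefficient discrepancy between $\omega\alpha^{\omega-1}d$ and $\alpha^{\omega-1}d$. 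A useful sanity check is that at $\omega=1$ the whole chain collapses to the hypothesis $\sum_{i=k+1}^{r}b_i\le\sum_{i=1}^{k}b_i+d$ itself, confirming that no slack has been lost in the reduction and that the constants are sharp.
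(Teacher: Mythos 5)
This lemma is imported by the paper from \cite[Lemma 5.3]{CZ13} and is stated without proof, so there is no internal argument to compare against; your proof must stand on its own, and it does. Every step checks: $b_i\le b_k\le\alpha$ for $i>k$, the power-mean bound $\sum_{i=1}^{k}b_i\le k\alpha$ combined with the hypothesis to get $\sum_{i=k+1}^{r}b_i\le k\alpha+d$, the estimate $\sum_{i=k+1}^{r}b_i^{\omega}\le\alpha^{\omega-1}(k\alpha+d)$, the convexity step, and the separate treatment of $\alpha=0$; this is essentially the standard argument behind the cited result. One small simplification: the tangent-line (Bernoulli) inequality is not needed at the end, since
\[
k\left(\alpha+\frac{d}{k}\right)^{\omega}=k\left(\alpha+\frac{d}{k}\right)^{\omega-1}\left(\alpha+\frac{d}{k}\right)\ge k\,\alpha^{\omega-1}\left(\alpha+\frac{d}{k}\right)=k\alpha^{\omega}+\alpha^{\omega-1}d,
\]
using only monotonicity of $t\mapsto t^{\omega-1}$, which also removes the need to invoke $\omega\ge 1$ to absorb the coefficient discrepancy.
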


\section{NSP for Exact Recovery of Dictionary-Sparse Signals}\label{sec:S3}
In this section, for any $x_0\in D\mathbb{H}_{k}^N$, we consider the model (\ref{mod:lq}) with $\epsilon=0$ (namely in the noiseless scenario), i.e.,
\begin{equation}\label{mod:lqN}
\min\|D^*x\|_{q}^q,\quad \mbox{subject to}\quad |Ax|=|Ax_0|.
\end{equation}
In what follows, we investigate two NSP variants, which are necessary and sufficient conditions to guarantee the exact recovery of real and complex dictionary-sparse signals in (\ref{mod:lqN}) up to a global phase, respectively.

\subsection{The Real Case}
We first consider the sparse phase retrieval with redundant dictionary in the real setting, where the signal, the measurement matrix, and the redundant dictionary are all in the real number filed.

\begin{Theorem}\label{Thm:PRN}
Given a measurement matrix $A\in \mathbb{R}^{m\times n}$ and a redundant dictionary $D\in\mathbb{R}^{n\times N}$, the following properties are equivalent:
\begin{enumerate}
  \item[(a)] For any $x_0\in D\mathbb{R}_{k}^{N}$, we have
  \[
  \mathop{\arg\min}_{x\in \mathbb{R}^{n}}\{\|D^*x\|_{q}^q: |Ax|=|Ax_0|\}=\{\pm x_0\}.
  \]
  \item[(b)] For every $\Lambda\subseteq[1:m]$ with $|\Lambda|\le k$, it holds
  \[
  \|D^*(u+v)\|_{q}^q<\|D^*(u-v)\|_{q}^q
  \]
  for all nonzero $u\in \mathcal{N}(A_{\Lambda})$ and $v\in \mathcal{N}(A_{\Lambda^c})$ satisfying $u+v\in D\mathbb{R}_{k}^N$, where $\mathcal{N}(A)$ denotes the null space of $A$.
\end{enumerate}
\end{Theorem}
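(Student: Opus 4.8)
The plan is to prove the two implications separately, with the whole argument resting on the real-case factorization of the phaseless constraint: since $\mathbb{H}=\mathbb{R}$, the identity $|Ax|=|Ax_0|$ is equivalent to $\langle a_i,x-x_0\rangle\,\langle a_i,x+x_0\rangle=0$ for every $i\in[1:m]$, because $|\langle a_i,x\rangle|=|\langle a_i,x_0\rangle|$ amounts to $\langle a_i,x\rangle^2=\langle a_i,x_0\rangle^2$. This factorization is exactly what lets a feasible competitor $x$ be encoded by a splitting of the index set together with the null-space vectors $u=x_0+x$ and $v=x_0-x$. Under this change of variables one has $u+v=2x_0$ and $u-v=2x$, so $u+v\in D\mathbb{R}_k^N$ records the dictionary-$k$-sparsity of the ground truth, while the two null-space memberships encode on which measurements the signs of $Ax$ and $Ax_0$ agree or disagree. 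Since $\|D^*(\alpha w)\|_q^q=|\alpha|^q\|D^*w\|_q^q$, the strict inequality in (b) is equivalent to $\|D^*x_0\|_q^q<\|D^*x\|_q^q$; this homogeneity, together with the scale-invariance of $D\mathbb{R}_k^N$, will be used freely.

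For (b) $\Rightarrow$ (a), I would fix $x_0\in D\mathbb{R}_k^N$ and an arbitrary feasible $x$ with $x\neq\pm x_0$, set $u=x_0+x$ and $v=x_0-x$ (both nonzero precisely because $x\neq\pm x_0$), and let $\Lambda$ be the set of indices on which $\langle a_i,u\rangle=0$. The factorization forces $\langle a_i,v\rangle=0$ on $\Lambda^c$, so $u\in\mathcal{N}(A_\Lambda)$, $v\in\mathcal{N}(A_{\Lambda^c})$, and $u+v=2x_0\in D\mathbb{R}_k^N$. Applying (b) to this partition yields $\|D^*x_0\|_q^q<\|D^*x\|_q^q$, i.e. every feasible point other than $\pm x_0$ strictly exceeds the objective value at $x_0$; since $\|D^*x_0\|_q^q=\|D^*(-x_0)\|_q^q$ and both $\pm x_0$ are feasible, the argmin is exactly $\{\pm x_0\}$.

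For the converse I would argue by contraposition. Assuming (b) fails, there are a partition $\Lambda$ and nonzero $u\in\mathcal{N}(A_\Lambda)$, $v\in\mathcal{N}(A_{\Lambda^c})$ with $u+v\in D\mathbb{R}_k^N$ and $\|D^*(u+v)\|_q^q\ge\|D^*(u-v)\|_q^q$. Setting $x_0=\tfrac12(u+v)\in D\mathbb{R}_k^N$ and $x=\tfrac12(u-v)$, the null-space memberships give $\langle a_i,x_0\rangle=-\langle a_i,x\rangle$ on $\Lambda$ and $\langle a_i,x_0\rangle=\langle a_i,x\rangle$ on $\Lambda^c$, hence $|Ax|=|Ax_0|$; moreover $u,v\neq0$ forces $x\neq\pm x_0$. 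The assumed inequality then reads $\|D^*x\|_q^q\le\|D^*x_0\|_q^q$, so $x_0$ is not the unique minimizer up to sign, contradicting (a).

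The main obstacle I anticipate is the bookkeeping of the sign pattern $\Lambda$: one must verify that the partition extracted from feasibility lies within the range of subsets over which (b) quantifies, and that indices where both $\langle a_i,u\rangle$ and $\langle a_i,v\rangle$ vanish (which may be assigned to either side) do not break the correspondence. Apart from this combinatorial matching, the argument is a pure change of variables combined with the $q$-homogeneity of $\|D^*\cdot\|_q^q$; none of the preliminary inequalities is needed here, as those are reserved for the quantitative S-DRIP results in the noisy setting.
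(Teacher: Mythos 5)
Your proposal is correct and takes essentially the same route as the paper's own proof: the same substitution $u=x_0+x$, $v=x_0-x$, the same sign-splitting of $[1:m]$ according to where $\langle a_i,x_0+x\rangle$ vanishes, and the same use of the $q$-homogeneity of $\|D^*(\cdot)\|_q^q$ (your arguing (b)$\Rightarrow$(a) directly rather than by contradiction, and your factor-$\tfrac12$ normalization in the converse, are cosmetic differences). The one obstacle you flag --- whether the index set $\Lambda$ extracted from feasibility satisfies $|\Lambda|\le k$ --- is a genuine issue, but it is a defect of the theorem statement rather than of your argument: the paper's proof applies (b) to whatever $\Lambda$ feasibility produces without ever checking its cardinality, exactly as you do, and in the source result (Theorem 3.1 of \cite{Gao2017}) condition (b) quantifies over all $\Lambda\subseteq[1:m]$ with no cardinality restriction, which is what both proofs in fact require.
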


\begin{proof}
The proof is similar to that of Theorem 3.1 in \cite{Gao2017}. We first prove $(b)\Rightarrow(a)$. Suppose that $(a)$ is false, i.e., there exists a solution $\hat{x}\ne\pm x_0$ to (\ref{mod:lqN}). Thus, we obtain
\begin{equation}\label{Thm:PRN0}
|A\hat{x}|=|Ax_0|,
\end{equation}
and
\begin{equation}\label{Thm:PRN3}
\|D^*\hat{x}\|_{q}^q\le\|D^*x_0\|_{q}^q.
\end{equation}
Let $\left\{a_i^T,\;i=1,2,\cdots,m\right\}$ be all the rows of $A$. Then, (\ref{Thm:PRN0}) implies that there exists a subset $\Lambda\subseteq[1:m]$ satisfying
\begin{equation*}\label{Thm:PRN4}
\langle a_i,x_0+\hat{x}\rangle=0,\;\; {\rm{for}}\;\; i\in \Lambda,
\end{equation*}
and
\begin{equation*}\label{Thm:PRN5}
\langle a_i,x_0-\hat{x}\rangle=0,\;\; {\rm{for}}\;\; i\in \Lambda^c,
\end{equation*}
namely,
\[
A_{\Lambda}(x_0+\hat{x})=0,\;A_{\Lambda^c}(x_0-\hat{x})=0.
\]
Set
\[
u:=x_0+\hat{x}\;\text{ and } \;v:=x_0-\hat{x}.
\]
Since $\hat{x}\ne\pm x_0$, we get $u\in \mathcal{N}(A_{\Lambda})\backslash \{0\}$,\;$v\in \mathcal{N}(A_{\Lambda^c})\backslash \{0\}$, and $u+v=2x_0\in D\mathbb{R}_{k}^N$.
Then, by $(b)$, we have
\[
\|D^*x_0\|_{q}^q<\|D^*\hat{x}\|_{q}^q,
\]
which is contradicted with (\ref{Thm:PRN3}).

Next, we prove $(a)\Rightarrow(b)$. Suppose that $(b)$ is false. In other words, there exist a subset $\Lambda\subseteq[1:m]$ with $|\Lambda|\le k$, and nonzero $u\in \mathcal{N}(A_\Lambda)$ and $v\in \mathcal{N}(A_{\Lambda^c})$ satisfying
\[
\|D^*(u+v)\|_{q}^q\ge\|D^*(u-v)\|_{q}^q,
\]
with $u+v\in D\mathbb{R}_{k}^N$.
Set $x_0:=u+v\in D\mathbb{R}_{k}^N$ and $\hat{x}:=u-v$. Then, we observe that $\hat{x}\neq\pm x_0$, and
\begin{equation}\label{Thm:PRN1}
\|D^*\hat{x}\|_{q}^q\le\|D^*x_0\|_{q}^q.
\end{equation}
Let $\left\{a_i^T,\;i=1,2,\cdots,m\right\}$ be all the rows of $A$. Then, we have
\[
2\langle a_i,u\rangle=\langle a_i,x_0+\hat{x}\rangle, \quad{\rm{for}}\; i=1,2,\cdots,m,
\]
and
\[
2\langle a_i,v\rangle=\langle a_i,x_0-\hat{x}\rangle, \quad {\rm{for}}\;i=1,2,\cdots,m.
\]
Due to $u\in \mathcal{N}(A_\Lambda)\backslash \{0\}$ and $v\in \mathcal{N}(A_{\Lambda^c})\backslash \{0\}$, we get
\[
\langle a_i,x_0\rangle=-\langle a_i,\hat{x}\rangle\;\; {\rm{for}}\;\; i\in \Lambda,
\]
and
\[
\langle a_i,x_0\rangle=\langle a_i,\hat{x}\rangle\;\; {\rm{for}}\;\; i\in \Lambda^c,
\]
which yields
\begin{equation}\label{Thm:PRN2}
|Ax_0|=|A\hat{x}|.
\end{equation}
Combining (\ref{Thm:PRN1}) and (\ref{Thm:PRN2}), one can see that $\hat{x}$ is a solution to (\ref{mod:lqN}), but $\hat{x}\neq\pm x_0$. This leads to a contradiction.
\end{proof}

\subsection{The Complex Case}
We next consider the sparse phase retrieval with redundant dictionary in the complex setting, where the signal, the measurement matrix, and the redundant dictionary are all in the complex number filed.
We call $\Omega=\{\Omega_1,\Omega_2,\cdots,\Omega_p\}$ a partition of $[1:m]$, if
\[
\Omega_i\subseteq[1:m], \; \bigcup_{i=1}^p\Omega_i=[1:m],\; \Omega_i\cap \Omega_j=\emptyset {\mbox{ for all }}i\ne j.
\]
Denote $\mathbb{S}:=\{c\in \mathbb{C}:|c|=1\}$. We have the following theorem, which is an extension of Theorem \ref{Thm:PRN} in the complex case.

\begin{Theorem}\label{Thm:Comp}
For a measurement matrix $A\in \mathbb{C}^{m\times n}$ and a  redundant dictionary $D\in\mathbb{C}^{n\times N}$, the following statements are equivalent:
\begin{itemize}
  \item [(a)] For any $x_0\in D\mathbb{C}_{k}^{N}$, we have
  \begin{equation*}\label{Thm:C0}
  \mathop{\arg\min}_{x\in \mathbb{C}^n}\{\|D^*x\|_{q}^q:|Ax|=|Ax_0|\}=\{cx_0:\;c\in \mathbb{S}\}.
  \end{equation*}
  \item [(b)] Assume that $\Omega=\{\Omega_1,\Omega_2,\cdots,\Omega_p\}$ is a partition of $[1:m]$, and that nonzero
  $\left\{\varphi_{i}\in \mathcal{N}(A_{\Omega_i})\right\}_{i=1}^p$ satisfy
  \begin{eqnarray}\label{Thm:C1}
  \frac{\varphi_1-\varphi_j}{d_1-d_j}=\frac{\varphi_1-\varphi_i}{d_1-d_i}\in D\mathbb{C}_{k}^{N}\backslash\{0\}\;\mbox{ for all } i,\;j\in[2:p] \text{ with }i\ne j,
  \end{eqnarray}
  for some distinct $d_1,d_2,\cdots,d_p\in \mathbb{S}$. Then,
  \begin{equation}\label{Thm:C2}
  \|D^*(\varphi_i-\varphi_j)\|_{q}^q<\|D^*(d_j\varphi_i-d_i\varphi_j)\|_{q}^q,
  \end{equation}
  holds for all $i,\;j\in[1:p]$ with $i\ne j$.
\end{itemize}
\end{Theorem}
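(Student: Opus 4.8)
The plan is to follow the structure of the real case (Theorem \ref{Thm:PRN}), but to replace the two-valued sign ambiguity by the full unimodular ambiguity $c\in\mathbb{S}$, which is precisely what forces the index set $[1:m]$ to be split into a partition $\Omega=\{\Omega_1,\dots,\Omega_p\}$ with $p$ possibly larger than two. The algebraic engine of both directions is the identity package relating a feasible point $\hat{x}$, the target $x_0$, and the blockwise null-space vectors $\varphi_j:=\hat{x}-d_jx_0$: for every $i\ne j$,
\[
\varphi_i-\varphi_j=(d_j-d_i)x_0,\qquad d_j\varphi_i-d_i\varphi_j=(d_j-d_i)\hat{x},
\]
combined with the $q$-homogeneity $\|D^*(cw)\|_q^q=|c|^q\|D^*w\|_q^q$ valid for all $c\in\mathbb{C}$. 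Cancelling the common nonzero factor $|d_j-d_i|^q$, these reduce the target inequality (\ref{Thm:C2}) to the plain comparison $\|D^*x_0\|_q^q<\|D^*\hat{x}\|_q^q$, so the whole theorem becomes a dictionary between feasible low-objective competitors $\hat{x}$ and admissible $(b)$-configurations.

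For $(b)\Rightarrow(a)$ I would argue by contradiction. If (a) fails for some $x_0\in D\mathbb{C}_k^N$, then there is a feasible $\hat{x}\notin\{cx_0:c\in\mathbb{S}\}$ with $\|D^*\hat{x}\|_q^q\le\|D^*x_0\|_q^q$. For each row $a_i$, feasibility $|\langle a_i,\hat{x}\rangle|=|\langle a_i,x_0\rangle|$ supplies a unimodular scalar $d^{(i)}$ with $\langle a_i,\hat{x}\rangle=d^{(i)}\langle a_i,x_0\rangle$ whenever $\langle a_i,x_0\rangle\ne 0$; grouping the rows by the distinct phase values $d_1,\dots,d_p$ (and assigning the rows with $\langle a_i,x_0\rangle=0$ to an arbitrary block) defines the partition $\Omega$, and $\varphi_j:=\hat{x}-d_jx_0$ lies in $\mathcal{N}(A_{\Omega_j})$ and is nonzero (else $\hat{x}=d_jx_0$). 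The identities show the common ratio in (\ref{Thm:C1}) equals $-x_0\in D\mathbb{C}_k^N\setminus\{0\}$, so this configuration meets the hypotheses of (b); hence, provided two distinct phases occur (i.e.\ $p\ge 2$), (\ref{Thm:C2}) holds for some pair $i\ne j$, i.e.\ $\|D^*x_0\|_q^q<\|D^*\hat{x}\|_q^q$, contradicting the choice of $\hat{x}$.

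For $(a)\Rightarrow(b)$ I would again argue contrapositively. Given a partition and nonzero vectors $\varphi_j\in\mathcal{N}(A_{\Omega_j})$ satisfying (\ref{Thm:C1}) but violating (\ref{Thm:C2}) at some pair $(i_0,j_0)$, set $x_0:=-\frac{\varphi_1-\varphi_i}{d_1-d_i}\in D\mathbb{C}_k^N\setminus\{0\}$ (the common ratio, independent of $i$) and $\hat{x}:=\varphi_1+d_1x_0$. One verifies $\varphi_j=\hat{x}-d_jx_0$ for every $j$, so that $\langle a_i,\hat{x}\rangle=d_j\langle a_i,x_0\rangle$ for $i\in\Omega_j$ and therefore $|A\hat{x}|=|Ax_0|$; thus $\hat{x}$ is feasible with $x_0\in D\mathbb{C}_k^N$. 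The violated inequality at $(i_0,j_0)$ translates through the same identities into $\|D^*\hat{x}\|_q^q\le\|D^*x_0\|_q^q$, so $\hat{x}$ is a minimizer. If $\hat{x}\notin\{cx_0\}$ this contradicts (a) directly; if $\hat{x}=cx_0$, then the nonvanishing of every $\varphi_j=(c-d_j)x_0$ forces $c\ne d_j$ and $x_0\in\mathcal{N}(A_{\Omega_j})$ for all $j$, hence $Ax_0=0$, so the feasible zero vector has smaller objective than $x_0$ and $x_0$ is not a minimizer, again contradicting (a).

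The step I expect to be the main obstacle is the bookkeeping of this partition, which is genuinely heavier than in the real case: there the phase lies in $\{\pm1\}$ and $p\le 2$, whereas here $p$ may be large and one must carefully treat the rows with $\langle a_i,x_0\rangle=0$ and, above all, the borderline block count $p=1$, for which (\ref{Thm:C1}) and (\ref{Thm:C2}) are vacuous. In that case $\hat{x}-d_1x_0\in\mathcal{N}(A)$ is a pure null-space (non-phase) obstruction that the phase-partition machinery does not detect, and I would isolate it separately, relating it back to the optimality in (a) through the same feasibility-and-objective reduction. Once the partition is set up correctly and these degenerate configurations are disposed of, the remaining algebra and the appeal to $q$-homogeneity are entirely routine.
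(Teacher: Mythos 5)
Your proposal follows essentially the same route as the paper's proof. In $(b)\Rightarrow(a)$ you extract the row phases from $|A\hat{x}|=|Ax_0|$, group rows by phase into a partition, set $\varphi_j:=\hat{x}-d_jx_0$ (the paper uses $d_jx_0-\hat{x}$, an immaterial sign), and reduce (\ref{Thm:C2}) to $\|D^*x_0\|_q^q<\|D^*\hat{x}\|_q^q$ via the two identities and $q$-homogeneity; in $(a)\Rightarrow(b)$ you reverse the construction, your pair $(x_0,\hat{x})$ being the paper's pair $(\varphi_{i_0}-\varphi_{j_0},\,d_{j_0}\varphi_{i_0}-d_{i_0}\varphi_{j_0})$ divided by $d_{j_0}-d_{i_0}$. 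In fact your $(a)\Rightarrow(b)$ is more careful than the paper's in two respects: the global identity $\varphi_j=\hat{x}-d_jx_0$ yields $|A\hat{x}|=|Ax_0|$ for all rows at once, whereas the paper verifies feasibility block by block through an auxiliary vector $y_0$; and you treat the degenerate possibility $\hat{x}=cx_0$, which the paper simply asserts cannot occur. That assertion is actually unjustified when $\varphi_{i_0}$ and $\varphi_{j_0}$ are parallel (one can build violating configurations with $\varphi_{j_0}=\lambda\varphi_{i_0}$, $\lambda\in\mathbb{S}$, for which $\hat{x}\in\{cx_0\}$), and your argument --- $\varphi_j=(c-d_j)x_0\neq 0$ forces $A_{\Omega_j}x_0=0$ for all $j$, so $0$ is feasible and strictly beats $x_0$ --- is precisely the missing patch.

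The one place where your write-up is not yet a proof is the $p=1$ case of $(b)\Rightarrow(a)$, which you correctly flag but do not close: your stated plan of ``relating it back to the optimality in (a)'' is circular in that direction, since (a) is exactly what is being proved there. The fix is short and stays inside the same algebra. The paper's definition of a partition does not require the blocks to be nonempty; hence, when every row carries the same forced phase $d_1$ (so $\hat{x}-d_1x_0\in\mathcal{N}(A)\setminus\{0\}$), take the two-block partition $\Omega_1=[1:m]$, $\Omega_2=\emptyset$, for which $\mathcal{N}(A_{\Omega_2})=\mathbb{C}^n$, and set $\varphi_1=\hat{x}-d_1x_0$, $\varphi_2=\hat{x}-d_2x_0$ with any $d_2\in\mathbb{S}$, $d_2\neq d_1$; both are nonzero because $\hat{x}\notin\{cx_0:c\in\mathbb{S}\}$, condition (\ref{Thm:C1}) holds with common ratio $-x_0\in D\mathbb{C}_k^N\setminus\{0\}$, and (\ref{Thm:C2}) again gives $\|D^*x_0\|_q^q<\|D^*\hat{x}\|_q^q$, the desired contradiction. (Alternatively, if some row satisfies $\langle a_i,x_0\rangle=0$, move those rows into a second nonempty block with a different phase.) It is worth noting that the paper's own proof silently assumes $p\ge 2$ and thus contains exactly this gap; your version at least identifies it, but a complete argument must include the device above.
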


\begin{proof}
The proof is similar to that of Theorem 3.2 in \cite{Gao2017}. We first prove $(b)\Rightarrow(a)$. Suppose that (a) is false, i.e., there exists a solution $\hat{x}\notin\{cx_0:\, c\in \mathbb{S}\}$ to (\ref{mod:lqN}) such that
\begin{equation}\label{Thm:C3}
\|D^*\hat{x}\|_{q}^q\le\|D^*x_0\|_{q}^q,
\end{equation}
and
\begin{equation}\label{Thm:C4}
|A\hat{x}|=|Ax_0|.
\end{equation}
Define all the rows of $A$ by $\left\{a_i^*,\; i=1,2,\cdots,m\right\}$. Then, we can derive
\begin{equation}\label{Thm:C5}
\langle a_i,\hat{x}\rangle=\langle a_i,d_ix_0\rangle,
\end{equation}
where $d_i\in \mathbb{S},\,i=1,2,\cdots,m$. We can use $\sim $ to represent an equivalence relation on $[1:m]$; that is to say, $i\sim j$, for $d_i=d_j$. Thus,
this equivalence relation can result in a partition $\Omega=\{\Omega_1,\Omega_2,\cdots,\Omega_p\}$ of $[1:m]$. For any $\Omega_i$, we know
\[
A_{\Omega_i}\hat{x}=A_{\Omega_i}(d_ix_0), \;i=1,2,\cdots,p.
\]
Set
\begin{equation}\label{Thm:C6}
\varphi_i:=d_ix_0-\hat{x},\;\; i=1,2,\cdots,p.
\end{equation}
We know $\varphi_i\in \mathcal{N}(A_{\Omega_i})\backslash\{0\}$, and
\[
\frac{\varphi_1-\varphi_j}{d_1-d_j}=\frac{\varphi_1-\varphi_i}{d_1-d_i}=x_0\in D\mathbb{C}_{k}^{N},\;\mbox{ for all } i,\;j\in[2:p],\; i\ne j.
\]
From the condition $(b)$, we have
\[
\|D^*(d_i-d_j)x_0\|_{q}^q<\|D^*(d_i-d_j)\hat{x}\|_{q}^q,
\]
namely,
\[
\|D^*x_0\|_{q}^q<\|D^*\hat{x}\|_{q}^q,
\]
which is contradicted with (\ref{Thm:C3}).

Next, we prove $(a)\Rightarrow(b)$. Suppose that $(b)$ is false. In other words, there exist a partition $\Omega=\{\Omega_1,\Omega_2,\cdots,\Omega_p\}$ of $[1:m]$, and nonzero $\left\{\varphi_i\in \mathcal{N}(A_{\Omega_i}),\;i\in [1:p]\right\}$ satisfying (\ref{Thm:C1}), but
\begin{equation*}\label{Thm:C8}
\|D^*(\varphi_{i_0}-\varphi_{j_0})\|_{q}^q\ge\|D^*(d_{j_0}\varphi_{i_0}-d_{i_0}\varphi_{j_0})\|_{q}^q
\end{equation*}
for some distinct $i_0,\;j_0\in[1:p]$.
Let
\begin{equation}\label{Thm:C10}
x_0:=\varphi_{i_0}-\varphi_{j_0}\in D\mathbb{C}_{k}^N\backslash\{0\},
\end{equation}
and
\begin{equation}\label{Thm:C9}
\hat{x}:=d_{j_0}\varphi_{i_0}-d_{i_0}\varphi_{j_0},\; d_{j_0}\ne d_{i_0}.
\end{equation}
Therefore,
\[
\hat{x}\notin \{cx_0, \;c\in \mathbb{S}\},
\]
and
\begin{equation}\label{Thm:add}
\|D^*\hat{x}\|_{q}^q\le\|D^*x_0\|_{q}^q.
\end{equation}
Define all the rows of $A$ by $\left\{a_i^*,\; i=1,2,\cdots,m\right\}$. Since $\varphi_i\in \mathcal{N}(A_{\Omega_i})\backslash\{0\}$, we have
\[
\langle a_k,\varphi_{i_0}\rangle=0\;\; \mbox{or} \,\;\langle a_k,\varphi_{j_0}\rangle=0,\;\; k\in \Omega_{i_0}\cup \Omega_{j_0}.
\]
Hence, from the definitions (\ref{Thm:C10})and (\ref{Thm:C9}), we obtain
\begin{equation}\label{Thm:C11}
|\langle a_k,x_0\rangle|=|\langle a_k,\hat{x}\rangle|,\;\; k\in \Omega_{i_0}\cup \Omega_{j_0}.
\end{equation}
For $k\notin \Omega_{i_0}\cup \Omega_{j_0}$, we assume that $k\in \Omega_{s}\;(s\ne i_0,\;j_0)$, namely, $\langle a_k,\varphi_{s}\rangle=0$.
From (\ref{Thm:C1}), we get
\begin{equation}\label{Thm:C12}
\frac{\varphi_i-\varphi_j}{d_i-d_j}=\frac{\varphi_l-\varphi_t}{d_l-d_t},
\end{equation}
for all $i,j,l,t\in [1:p]$ with $i\ne j$ and $l\ne t$.
Define
\[
y_0:=\frac{\varphi_{i_0}-\varphi_s}{d_{i_0}-d_s}=\frac{\varphi_{j_0}-\varphi_s}{d_{j_0}-d_s}\in D\mathbb{C}_{k}^N\backslash\{0\}.
\]
Then,
\[
\varphi_{i_0}=(d_{i_0}-d_s)y_0+\varphi_s,
\]
and
\[
\varphi_{j_0}=(d_{j_0}-d_s)y_0+\varphi_s.
\]
As a result, $x_0$ and $\hat{x}$ can be represented by
\[
x_0=\varphi_{i_0}-\varphi_{j_0}=(d_{i_0}-d_{j_0})y_0,
\]
and
\[
\hat{x}=d_{j_0}\varphi_{i_0}-d_{i_0}\varphi_{j_0}=(d_{i_0}-d_{j_0})d_sy_0+(d_{j_0}-d_{i_0})\varphi_s,
\]
respectively. As $\langle a_k,\varphi_s\rangle=0$, we have
\[
|\langle a_k, \hat{x}\rangle|=|\langle a_k,x_0\rangle|,\; k\in \Omega_{s}.
\]
We can prove that the claim holds for the other subsets $\Omega_i$ by using a similar argument. Thus,
\begin{equation}\label{Thm:C13}
|\langle a_k, \hat{x}\rangle|=|\langle a_k,x_0\rangle|\;\;\mbox{for all}\;\;k.
\end{equation}
(\ref{Thm:add}) and (\ref{Thm:C13}) implies that $\hat{x}\notin\{cx_0:\;c\in \mathbb{S}\}$ is a solution to (\ref{mod:lqN}), which leads to a contradiction. This completes the proof.
\end{proof}

\begin{Remark}
When $q=1$, the $\ell_q$-analysis model (\ref{mod:lqN}) reduces to the $\ell_1$-analysis model for sparse phase retrieval with redundant dictionary studied in \cite{Gao2017}, and accordingly our Theorems \ref{Thm:PRN} and \ref{Thm:Comp} above reduce to Theorems 3.1 and 3.2 in \cite{Gao2017}, respectively.
\end{Remark}

To sum up, two novel NSP conditions we proposed in Theorems \ref{Thm:PRN} and \ref{Thm:Comp} above can guarantee the uniqueness (up to a global phase) of sparse phase retrieval with redundant dictionary via the $\ell_q$ minimization (\ref{mod:lqN}) for both the real and complex settings in the noiseless scenario, respectively.

\section{S-DRIP for Stable Recovery of Real Dictionary-Sparse Signals}\label{sec:S4}
In this section, a novel S-DRIP condition for stable recovery of a dictionary-sparse signal from the phaseless measurements via the $\ell_q$-analysis model (\ref{mod:lq}) with noise is studied in the real setting. Specifically, the S-DRIP of order $tk$ with $t\geq2$ was presented in \cite{CH22} with respect to the $\ell_q$-analysis model (\ref{mod:lq}), but in this paper we consider the S-DRIP of order $tk$ with $0<t<\frac{4}{3}$ in order to broaden the scope of conditions for stable recovery of real dictionary-$k$-sparse signals.

\begin{Theorem}\label{Thm:S-DRIP}
Let $D\in \mathbb{R}^{n\times N}$ be a tight frame, and $x_0\in D\mathbb{R}_{k}^{N}$. Suppose that a measurement matrix $A\in \mathbb{R}^{m\times n}$ satisfies the S-DRIP of order $tk$ with constants $\theta_{-},\;\theta_{+}\in(0,2)$ satisfying
\[
\max\left\{\frac{(3+2^{\frac{2}{q}-2})(1-\theta_{-})}{2-\theta_{-}},\;\frac{(3+2^{\frac{2}{q}-2})(\theta_{+}-1)}{\theta_{+}}\right\}<t<\frac{4}{3}.
\]
Then, a solution $\hat{x}$ to $(\ref{mod:lq})$ satisfies
\begin{equation}\label{add-1}
\min{\{\|\hat{x}-x_0\|_2,\;\|\hat{x}+x_0\|_2\}}\le c_1\epsilon+c_2\frac{2^{\frac{2}{q}-1}\sigma_{k}(D^*x_0)_q}{k^{\frac{1}{q}-\frac{1}{2}}},
\end{equation}
where
\begin{align*}
c_1&=\frac{(1+2^{\frac{1}{q}-1})\tilde{t}\sqrt{1+\delta_{tk}}}{t-(3+2^{\frac{2}{q}-2}-t)\delta_{tk}},\\
c_2&=\frac{(1+2^{\frac{1}{q}-1})\left\{2^{\frac{1}{q}}\delta_{tk}+\sqrt{\left[t-\big(3+2^{\frac{2}{q}-2}-t\big)\delta_{tk}\right]\delta_{tk}}\right\}}{t-(3+2^{\frac{2}{q}-2}-t)\delta_{tk}}
     +1.
\end{align*}
Here, $\tilde{t}=\max\{t,\sqrt{t}\}$, and $\delta_{tk}$ is a D-RIP constant satisfying
\[
\delta_{tk}\le \max\{1-\theta_{-},\;\theta_{+}-1\}<\frac{t}{3+2^{\frac{2}{q}-2}-t}.
\]
\end{Theorem}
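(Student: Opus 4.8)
The plan is to adapt the $\ell_q$-RIP recovery analysis of \cite{ZL18,ZL19} to the dictionary/phase-retrieval setting, mirroring the structure of \cite{CH22} but driving the order of the S-DRIP down into the regime $0<t<\tfrac43$ via the Cai--Zhang sparse-representation-of-a-polytope technique (Lemma~\ref{Lem:Add1}), together with the combinatorial averaging identities (Lemma~\ref{lem1}) and the convex-decomposition bound (Lemma~\ref{lem2}).

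First I would carry out the phase-retrieval reduction. For each row $a_i$ one has $\bigl||\langle a_i,\hat x\rangle|-|\langle a_i,x_0\rangle|\bigr|=|\langle a_i,\hat x-x_0\rangle|$ when $\langle a_i,\hat x\rangle$ and $\langle a_i,x_0\rangle$ share a sign, and $=|\langle a_i,\hat x+x_0\rangle|$ otherwise. Splitting $[1:m]=S\cup S^{c}$ by sign agreement and invoking feasibility $\||A\hat x|-|Ax_0|\|_2^2\le\epsilon^2$ yields $\|A_S(\hat x-x_0)\|_2^2+\|A_{S^c}(\hat x+x_0)\|_2^2\le\epsilon^2$. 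Since one of $|S|,|S^c|$ is at least $m/2$, by symmetry I may assume $|S|\ge m/2$ and set $h:=\hat x-x_0$, so that $\|A_Sh\|_2\le\epsilon$ (the opposite case gives $h=\hat x+x_0$, which is why the conclusion takes the minimum over $\pm x_0$). Writing $v:=D^{*}h$ and using $DD^{*}=I_n$ gives $h=Dv$, hence $\|A_SDv\|_2\le\epsilon$; moreover, because $|S|\ge m/2$, the S-DRIP of order $tk$ furnishes, for every $tk$-sparse $w$, the D-RIP-type bounds $(1-\delta_{tk})\|Dw\|_2^2\le\|A_SDw\|_2^2\le(1+\delta_{tk})\|Dw\|_2^2$ with $\delta_{tk}\le\max\{1-\theta_-,\theta_+-1\}$, together with the polarization consequence for inner products $\langle A_SDu,A_SDw\rangle$.

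Next I would extract the cone condition from optimality. With $f_0:=D^{*}x_0$, $\hat f:=D^{*}\hat x=f_0+v$, and $T_0$ the support of the $k$ largest entries of $f_0$, the constraint $\|\hat f\|_q^q\le\|f_0\|_q^q$ and the $q$-subadditivity of $\|\cdot\|_q^q$ give
\[
\|v_{T_0^{c}}\|_q^q\le\|v_{T_0}\|_q^q+2\sigma_k(f_0)_q^q .
\]
This places $v_{T_0^c}$ in a polytope, so Lemma~\ref{lem2} (with $\alpha$ the magnitude of the smallest head entry) lets me write the tail as a convex combination $v_{T_0^c}=\sum_i\lambda_i\gamma_i$ of sparse vectors with $\sum_i\lambda_i\|\gamma_i\|_2^2$ controlled, while Lemma~\ref{Lem:Add1} converts the $\ell_q$ tail bound into the $\ell_2$ estimates that feed the S-DRIP. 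When $t\ge1$ each piece $\beta_i:=v_{T_0}+\gamma_i$ is $tk$-sparse and $\sum_i\lambda_i\beta_i=v$, so expanding $\|A_SDv\|_2^2=\sum_i\lambda_i\langle A_SDv,A_SD\beta_i\rangle$ and applying the polarized S-DRIP bounds produces a quadratic inequality for $\|Dv_{T_0}\|_2$; solving it and adding back the tail gives the stated bound on $\|h\|_2=\|Dv\|_2$, with the $2^{1/q}$-type factors arising from the $\ell_q$-to-$\ell_2$ conversions. When $t<1$ the head $v_{T_0}$ (of size $k$) is already too large for an order-$tk$ S-DRIP, so here I would additionally split $T_0$ into all size-$\lceil tk\rceil$ subsets and use the identities of Lemma~\ref{lem1} to reassemble $\|Dv_{T_0}\|_2^2$ and the cross terms from the S-DRIP estimates on the individual $tk$-sparse pieces; the factor $\tilde t=\max\{t,\sqrt t\}$ records these two regimes.

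The main obstacle is this last step: arranging the decomposition so that the S-DRIP, available only on index sets of size $\ge m/2$ (forcing the use of $A_S$ rather than the full $A$), still yields the sharp threshold $\delta_{tk}<t/(3+2^{2/q-2}-t)$ in the demanding range $t<\tfrac43$. The nonconvex $\ell_q$ quantities cannot be handled by the ordinary triangle inequality and must be tracked through Lemmas~\ref{lem1}--\ref{Lem:Add1}, and the bookkeeping of the resulting $2^{2/q-2}$ and $2^{1/q-1}$ constants is delicate. A useful internal check is that the displayed lower bounds on $t$ are exactly equivalent to $\max\{1-\theta_-,\theta_+-1\}<t/(3+2^{2/q-2}-t)$, which is precisely the condition keeping the leading coefficient $t-(3+2^{2/q-2}-t)\delta_{tk}$ of the quadratic inequality positive; verifying this equivalence confirms that the admissible window for $t$ is the natural one for the method.
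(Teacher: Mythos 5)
Your outer reduction is exactly the paper's: split $[1:m]$ by sign agreement, note that one of $|S|,|S^c|\ge m/2$, so the S-DRIP of order $tk$ hands the submatrix $A_S$ (or $A_{S^c}$) a D-RIP constant $\delta_{tk}\le\max\{1-\theta_-,\theta_+-1\}$, and the theorem reduces to a stable-recovery lemma for $A_S$ (the paper's Lemma~\ref{Lem:D-RIP}); your closing check that the hypotheses on $t$ are equivalent to $\delta_{tk}<t/(3+2^{\frac{2}{q}-2}-t)$ is also correct. The genuine gap is in the core estimate, precisely where the range $t<\frac43$ has to be earned. For $1\le t<\frac43$ you propose decomposing the tail into $(t-1)k$-sparse pieces $\gamma_i$, setting $\beta_i:=v_{T_0}+\gamma_i$, and expanding $\|A_SDv\|_2^2=\sum_i\lambda_i\langle A_SDv,A_SD\beta_i\rangle$ with polarization. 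That is the Cai--Zhang polytope argument, and it is exactly the technique whose natural limit is $t\ge\frac43$: at $t=1$ your pieces $\gamma_i$ are $0$-sparse, so the decomposition is vacuous (it would force $v_{T_0^c}=0$), and for $1<t<\frac43$ the quadratic inequality it yields closes only under a threshold of the shape $\delta_{tk}<\sqrt{(t-1)/t}$, which degenerates to $0$ as $t\to1$ and cannot produce the constant $t/(3+2^{\frac{2}{q}-2}-t)$ claimed in the theorem; the polarization step also silently requires an isometry bound on $\beta_i\pm\beta_j$, whose support can have size $(2t-1)k>tk$.

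What the paper actually does (following Zhang--Li \cite{ZL18}) is deploy the combinatorial machinery in \emph{both} regimes, not only for $t<1$: the head $S_0$ (the $k$ largest entries of $D^*h$, not of $D^*x_0$ --- your choice of $T_0$ as head would break the $\ell_\infty$ bound needed to invoke Lemma~\ref{lem2}) is averaged over all pairs of subsets $T_i,S_j\subset S_0$ with $|T_i|=a$, $|S_j|=b$, $a+b=tk$, $a\approx b\approx tk/2$, using the identities of Lemma~\ref{lem1}; the tail is decomposed by Lemma~\ref{lem2} into $a$-sparse, $b$-sparse, and (only when $t>1$) $(t-1)k$-sparse vectors; and everything is packaged into the quantity $Y_{a,b}$, which also carries the $\|D^{\bot}(\cdot)\|_2^2$ terms forced by the fact that $D^*D\neq I_N$ --- a dictionary-specific piece of bookkeeping your sketch omits entirely. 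The algebraic identities (\ref{lem:D9}) (with the telltale coefficient $4-3t>0$) and (\ref{lem:D10}) are what convert $Y_{a,b}$ plus the noise term into a quadratic inequality for $\|D_{S_0}^*h\|_2$ valid at the stated threshold; your plan replaces them, in the regime $1\le t<\frac43$, by an argument that cannot reach that range, and for $t<1$ it describes the splitting incorrectly (single subsets of size $\lceil tk\rceil$ rather than pairs of subsets of sizes $a$ and $b$ summing to $tk$).
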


Before proving Theorem \ref{Thm:S-DRIP}, we first introduce a lemma as follows.
\begin{Lemma}\label{Lem:D-RIP}
Let $D$ be a tight frame, $x_0\in D\mathbb{R}_{k}^{N}$. Suppose that a measurement matrix $A\in \mathbb{R}^{m\times n}$ satisfies the D-RIP with a constant $\delta_{tk}<\frac{t}{3+2^{\frac{2}{q}-2}-t}$ for $0<t<\frac{4}{3}$. Then, for any
\begin{align*}
D^*\hat{x}\in\{&D^*x\in \mathbb{R}^N:\|D^*x\|_q^q\le\|D^*x_0\|_q^q,\;\|Ax-Ax_0\|_2\le\epsilon\},
\end{align*}
we have
\begin{equation}\label{Lem:D-RIP1}
\|\hat{x}-x_0\|_2\le c_1\epsilon+c_2\frac{2^{\frac{2}{q}-1}\sigma_{k}(D^*x_0)_q}{k^{\frac{1}{q}-\frac{1}{2}}},
\end{equation}
where
\begin{align*}
c_1&=\frac{(1+2^{\frac{1}{q}-1})\tilde{t}\sqrt{1+\delta_{tk}}}{t-(3+2^{\frac{2}{q}-2}-t)\delta_{tk}},\\
c_2&=\frac{(1+2^{\frac{1}{q}-1})\left\{2^{\frac{1}{q}}\delta_{tk}+\sqrt{\left[t-\big(3+2^{\frac{2}{q}-2}-t\big)\delta_{tk}\right]\delta_{tk}}\right\}}{t-(3+2^{\frac{2}{q}-2}-t)\delta_{tk}}+1,
\end{align*}
and $\tilde{t}=\max\{t,\sqrt{t}\}$.
\end{Lemma}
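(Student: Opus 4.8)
The plan is to reduce the statement to a stable recovery estimate for the coefficient error $u := D^{*}(\hat x - x_0)$, exploiting the tight-frame identities $h := \hat x - x_0 = DD^{*}h = Du$ and $\|h\|_2 = \|D^{*}h\|_2 = \|u\|_2$, so that it suffices to bound $\|u\|_2$. First I would let $T_0$ index the $k$ largest-magnitude entries of $D^{*}x_0$, so that $\|(D^{*}x_0)_{T_0^{c}}\|_q = \sigma_k(D^{*}x_0)_q$. Starting from the feasibility inequality $\|D^{*}\hat x\|_q^q \le \|D^{*}x_0\|_q^q$, writing $D^{*}\hat x = D^{*}x_0 + u$, splitting the $\ell_q$ quasi-norm over $T_0$ and $T_0^{c}$, and applying subadditivity of $t\mapsto|t|^q$ (valid for $0<q\le1$) in both directions, I would obtain the cone constraint
\[
\|u_{T_0^{c}}\|_q^q \le \|u_{T_0}\|_q^q + 2\,\sigma_k(D^{*}x_0)_q^{\,q}.
\]
The second ingredient is the tube constraint $\|Ah\|_2 = \|A\hat x - Ax_0\|_2 \le \epsilon$, which, together with $h=Du$, is what lets the D-RIP enter.

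The heart of the argument is to couple the cone constraint with the D-RIP, whose order $tk$ may be \emph{smaller} than the head size $k$ (since $t$ is allowed below $1$). Because of this, the D-RIP cannot be applied to the $k$-sparse block $u_{T_0}$ directly; instead I would average the D-RIP inequality \eqref{repl} over all subsets of size $\lceil tk\rceil$ of the relevant index set, using the combinatorial identities of Lemma \ref{lem1} to collapse the sums $\sum_i\sum_{p\in T_i}(\cdot)$ and $\sum_i\sum_{p\ne q\in T_i}\langle\cdot,\cdot\rangle$ back to the full blocks (the $\|D^{\bot}z\|_2^2$ terms in \eqref{repl} recombine via \eqref{xin0}). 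In parallel, I would represent the non-sparse tail $u_{T_0^{c}}$ as a convex combination $\sum_j\lambda_j b_j$ of sparse vectors via Lemma \ref{lem2}, which simultaneously supplies the $\ell_2$-energy control $\sum_j\lambda_j\|b_j\|_2^2$ needed to pass from the $\ell_q$ cone bound to an $\ell_2$ estimate; the factors $2^{1/q-1}$ and $2^{2/q-2}$ in the constants arise exactly from the $\ell_q$ quasi-triangle inequality $\|a+b\|_q\le 2^{1/q-1}(\|a\|_q+\|b\|_q)$ used here. The tube constraint is fed in through inner products $\langle Ah, ADw\rangle$ for the sparse pieces $w$, estimated by a polarization/parallelogram identity applied to the averaged D-RIP bounds.

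With these pieces in place I would peel off a further block $T_1$ of largest tail entries and use Lemma \ref{Lem:Add1} (with $\omega=2$) to bound $\|u_{(T_0\cup T_1)^{c}}\|_2$ by a multiple of $k^{1/2-1/q}$ times a head $\ell_q$ mass plus the approximation error, which is what produces the normalization $\sigma_k(D^{*}x_0)_q/k^{1/q-1/2}$ in \eqref{Lem:D-RIP1}. Assembling the averaged D-RIP estimate, the cone constraint and the tube constraint yields an inequality of the schematic form $\|u\|_2^2 \lesssim \delta_{tk}\,\|u\|_2^2 + (\text{terms linear in }\|u\|_2\text{ with coefficients }\epsilon\text{ and }\sigma)$; solving for $\|u\|_2$ under the hypothesis $\delta_{tk}<t/(3+2^{2/q-2}-t)$ — which is precisely what makes the coefficient $t-(3+2^{2/q-2}-t)\delta_{tk}$ of the leading term strictly positive — gives the claimed bound, with $\tilde t=\max\{t,\sqrt t\}$ tracking the two regimes $t\ge1$ and $t<1$ in the count of $\lceil tk\rceil$-blocks.

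I expect the main obstacle to be the bookkeeping in the regime $t<1$: there the D-RIP order lies below the head sparsity, so \emph{every} use of D-RIP must pass through the subset-averaging of Lemma \ref{lem1}, and one must propagate the constants $2^{1/q-1}$, $2^{2/q-2}$, $2^{2/q-1}$ through the convex-combination representation of Lemma \ref{lem2} and the polarization step without slack, so that the final denominator is exactly $t-(3+2^{2/q-2}-t)\delta_{tk}$ and the threshold $t/(3+2^{2/q-2}-t)$ is attained rather than a weaker constant. Making the interaction between the combinatorial averaging constants and the $\ell_q$ quasi-norm constants yield these sharp coefficients, not merely an order-correct estimate, is the delicate part; the remaining cone-plus-tube manipulation is by now routine.
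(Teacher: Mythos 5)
Your proposal follows essentially the same route as the paper's own proof (itself adapted from Zhang--Li \cite{ZL18}): cone constraint plus tube constraint, convex decomposition of the tail via Lemma~\ref{lem2}, subset-averaging with Lemma~\ref{lem1} and polarization-type identities so that the D-RIP of order $tk$ can be invoked, Lemma~\ref{Lem:Add1} to convert the $\ell_q$ cone bound into an $\ell_2$ tail bound, and finally a quadratic inequality in the head norm, with the two regimes $t\ge 1$ and $t<1$ producing $\tilde t=\max\{t,\sqrt t\}$. The only slips are cosmetic rather than structural: the paper averages over pairs of disjoint blocks of sizes $a+b=tk$ with $b\le a\le k$ (not single blocks of size $\lceil tk\rceil$), it transfers the cone constraint from the head $T_0$ of $D^*x_0$ to the head $S_0$ of $D^*h$ before applying Lemma~\ref{Lem:Add1} (whose ordering hypothesis needs this), and that lemma is used with $\omega=2/q$ rather than $\omega=2$.
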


The derivation of Lemma \ref{Lem:D-RIP} is similar to that of Theorem 2 in \cite{ZL18}. We present the proof of Lemma \ref{Lem:D-RIP} in the appendix.

\begin{proof}[Proof of Theorem~\ref{Thm:S-DRIP}]
Since $\hat{x}$ is a solution to $(\ref{mod:lq})$, we have
\begin{equation}\label{S-DRIP:SD2}
\|D^*\hat{x}\|_q^q\le\|D^*x_0\|_q^q
\end{equation}
and
\begin{equation}\label{S-DRIP:SD3}
\||A\hat{x}|-|Ax_0|\|_{2}^2\le\epsilon^2.
\end{equation}
Define all the rows of matrix $A$ by $\left\{a_i^T,\;i=1,2,\cdots,m\right\}$, and divide the index set $[1:m]$ into two groups:
\[
S=\{i:{\rm{sign}}(\langle a_i,\hat{x}\rangle)={\rm{sign}}(\langle a_i,x_0\rangle)\},
\]
and
\[
S^c=\{i:{\rm{sign}}(\langle a_i,\hat{x}\rangle)=-{\rm{sign}}(\langle a_i,x_0\rangle)\}.
\]
Then, we know either $|S|\ge \frac{m}{2}$ or $|S^c|\ge \frac{m}{2}$.

First, we consider the case of $|S|\ge \frac{m}{2}$. From (\ref{S-DRIP:SD3}), we have
\begin{align}\label{S-DRIP:SD4}
\|A_{S}\hat{x}-A_{S}x_0\|_{2}^2&\le\|A_{S}\hat{x}-A_{S}x_0\|_{2}^2+\|A_{S^c}\hat{x}+A_{S^c}x_0\|_{2}^2\le\epsilon^2.
\end{align}
It follows from (\ref{S-DRIP:SD2}) and (\ref{S-DRIP:SD4}) that
\begin{align}\label{S-DRIP:SD5}
D^*\hat{x}\in\{&D^*x\in\mathbb{R}^N:\|D^*x\|_q^q\le\|D^*x_0\|_q^q,\;\|A_{S}x-A_{S}x_0\|_{2}\le\epsilon\}.
\end{align}
Since $A$ satisfies the S-DRIP of order $tk$ with constants $\theta_{-},\;\theta_{+}\in(0,2)$ satisfying
\[
\max\left\{\frac{(3+2^{\frac{2}{q}-2})(1-\theta_{-})}{2-\theta_{-}},\;\frac{(3+2^{\frac{2}{q}-2})(\theta_{+}-1)}{\theta_{+}}\right\}<t<\frac{4}{3},
\]
$A_S$ satisfies the D-RIP of order $tk$ with
\begin{equation}\label{S-DRIP:SD6}
\delta_{tk}\le\max\{1-\theta_{-},\;\theta_{+}-1\}<\frac{t}{3+2^{\frac{2}{q}-2}-t}.
\end{equation}
According to Lemma~\ref{Lem:D-RIP}, we can obtain
\[
\|\hat{x}-x_0\|_2\le c_1\epsilon+c_2\frac{2^{\frac{2}{q}-1}\sigma_{k}(D^*x_0)_q}{k^{\frac{1}{q}-\frac{1}{2}}},
\]
where $c_1$ and $c_2$ are defined above in Lemma \ref{Lem:D-RIP}.

Similarly, for the case of $|S^c|\ge \frac{m}{2}$, we can obtain
\[
\|\hat{x}+x_0\|_2\le c_1\epsilon+c_2\frac{2^{\frac{2}{q}-1}\sigma_{k}(D^*x_0)_q}{k^{\frac{1}{q}-\frac{1}{2}}}.
\]
\end{proof}

\begin{Remark}
In Theorem \ref{Thm:S-DRIP}, letting $\epsilon=0$, and $D^*{x_0}$ be $k$-sparse, the S-DRIP provides a condition for exact recovery in sparse phase retrieval with redundant dictionary via the $\ell_q$-analysis model (\ref{mod:lq}).
\end{Remark}


\begin{Remark}
When $q=1$, the $\ell_q$-analysis model (\ref{mod:lq}) reduces to the $\ell_1$-analysis model (\ref{mod:l1}), and at this point, our Theorem \ref{Thm:S-DRIP} presents the S-DRIP of order $tk$ with $0<t<\frac{4}{3}$ for stable recovery in sparse phase retrieval with redundant dictionary, which is complementary to Theorem 4.1 in \cite{Gao2017} where the S-DRIP of order $tk$ with $t>1$ was studied.
\end{Remark}

In particular, when $q=1$ and $D=I_n$, the $\ell_q$-analysis model (\ref{mod:lq}) corresponds to the standard sparse phase retrieval via the $\ell_1$ minimization, i.e.,
\begin{equation}\label{mod:l2}
\min\|x\|_{1} \quad{\mbox{subject to}}\quad\||Ax|-|Ax_0|\|_{2}^2\le\epsilon^2.
\end{equation}
Then, our Theorem \ref{Thm:S-DRIP} becomes complementary to Theorem 3.1 in \cite{GWX2016} as well. Specifically, we can obtain the following corollary of the $\ell_1$ minimization for sparse phase retrieval, provided that $A$ satisfies the S-RIP of order $tk$ with $0<t<\frac{4}{3}$.

\begin{Corollary}\label{Cor1}
Suppose that $A\in \mathbb{R}^{m\times n}$ satisfies the S-RIP of order $tk$ with $\theta_{-},\;\theta_{+}\in(0,2)$ satisfying
\[
\max\left\{\frac{4(1-\theta_{-})}{2-\theta_{-}},\;\frac{4(\theta_{+}-1)}{\theta_{+}}\right\}<t<\frac{4}{3}.
\]
Then, any solution $\hat{x}$ to the $\ell_1$ minimization $(\ref{mod:l2})$ satisfies
\begin{equation*}\label{S-DRIP:SD1}
\min{\{\|\hat{x}-x_0\|_2,\;\|\hat{x}+x_0\|_2\}}\le c_3\epsilon+c_4\frac{2\sigma_{k}(x_0)_1}{\sqrt{k}},
\end{equation*}
where
\begin{align*}
c_3=\frac{2\tilde{t}\sqrt{1+\delta_{tk}}}{t-(4-t)\delta_{tk}},\;c_4=\frac{4\delta_{tk}+2\sqrt{\big[t-(4-t)\delta_{tk}\big]\delta_{tk}}}{t-(4-t)\delta_{tk}}+1.
\end{align*}
Here, $\tilde{t}=\max\{t,\sqrt{t}\}$, and $\delta_{tk}$ is a D-RIP constant satisfying
\[
\delta_{tk}\le \max\{1-\theta_{-},\;\theta_{+}-1\}<\frac{t}{4-t}.
\]
\end{Corollary}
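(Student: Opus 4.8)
\noindent The plan is to obtain Corollary~\ref{Cor1} as a direct specialization of Theorem~\ref{Thm:S-DRIP} to the case $q=1$ and $D=I_n$, so that essentially no new machinery is required. The entire argument is to verify that, under these two choices, the hypotheses of the corollary coincide with the hypotheses of Theorem~\ref{Thm:S-DRIP}, and then to read off the conclusion after simplifying the fractional exponents.

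First I would handle the reduction of the optimization problem and the structural hypotheses. Setting $D=I_n$ makes $D^*=I_n$, so $\|D^*x\|_q^q=\|x\|_q^q$; with $q=1$ this turns the $\ell_q$-analysis model~(\ref{mod:lq}) into the standard $\ell_1$ minimization~(\ref{mod:l2}), and it identifies the set $D\mathbb{R}_k^N$ with $\mathbb{R}_k^n$, so that $x_0$ is simply $k$-sparse as required. Likewise, substituting $D=I_n$ into Definition~\ref{Def:SDRIP} replaces $Dv$ by $v$, whence the S-DRIP of order $tk$ collapses exactly to the S-RIP of order $tk$ assumed in the corollary. Thus every hypothesis of Theorem~\ref{Thm:S-DRIP} is met whenever the hypotheses of the corollary hold.

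The remaining work is purely arithmetic: substitute $q=1$ into the fractional exponents appearing in Theorem~\ref{Thm:S-DRIP}. One checks that $2^{2/q-2}=1$ (so $3+2^{2/q-2}=4$), that $2^{1/q-1}=1$ (so $1+2^{1/q-1}=2$), that $2^{1/q}=2$ and $2^{2/q-1}=2$, and that $k^{1/q-1/2}=\sqrt{k}$. Feeding these values into the admissible range for $t$ yields $\max\{4(1-\theta_-)/(2-\theta_-),\,4(\theta_+-1)/\theta_+\}<t<\tfrac{4}{3}$, and the D-RIP bound becomes $\delta_{tk}\le\max\{1-\theta_-,\,\theta_+-1\}<t/(4-t)$, exactly as stated. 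The same substitutions convert $c_1$ into $c_3$ and $c_2$ into $c_4$, and turn the right-hand side of~(\ref{add-1}) into $c_3\epsilon+c_4\,2\sigma_k(x_0)_1/\sqrt{k}$, completing the identification and hence the proof.

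Since Theorem~\ref{Thm:S-DRIP} already supplies the full estimate, I anticipate no genuine analytic obstacle. The only point that demands care is the bookkeeping of the exponents $2^{1/q-1}$, $2^{2/q-2}$, $2^{2/q-1}$ and $k^{1/q-1/2}$ at $q=1$, which must be verified to collapse cleanly to the integer constants appearing in Corollary~\ref{Cor1}; once these are confirmed, the corollary follows immediately by applying Theorem~\ref{Thm:S-DRIP}.
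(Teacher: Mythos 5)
Your proposal is correct and is exactly how the paper obtains Corollary~\ref{Cor1}: the paper gives no separate proof, deriving the corollary directly from Theorem~\ref{Thm:S-DRIP} by setting $q=1$ and $D=I_n$, under which the S-DRIP collapses to the S-RIP and the model~(\ref{mod:lq}) becomes~(\ref{mod:l2}). Your exponent bookkeeping ($2^{2/q-2}=1$, $1+2^{1/q-1}=2$, $2^{1/q}=2^{2/q-1}=2$, $k^{1/q-1/2}=\sqrt{k}$) correctly turns $c_1,c_2$ into $c_3,c_4$ and reproduces the stated ranges for $t$ and $\delta_{tk}$.
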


\section{Conclusion}\label{sec:S6}
In this paper, we considered the $\ell_q$-analysis model with $0<q\leq1$ for phase retrieval with redundant dictionary, where the original signal of interest is not sparse in the standard orthonormal basis, but is sparse in a redundant dictionary or frame. We first presented two NSP variants, which are necessary and sufficient for exact recovery of real and complex dictionary-$k$-sparse signals in the noiseless scenario. Furthermore, we provided the S-DRIP of order $tk$ with $0<t<\frac{4}{3}$ under which a dictionary-$k$-sparse signal can be stably recovered in the noisy scenario. Of note, in this paper we only focused on the theoretical results on the conditions for exact/stable recovery or dictionary-sparse signals via the $\ell_q$-analysis model. The algorithms of the $\ell_q$-analysis model are worthy of further study. In addition, the S-DRIP based results in this paper only hold in the real number field, and so it is interesting to extend the results to the complex number filed for future research.

\section*{Appendix}
\begin{appendix}
\setcounter{equation}{0}
\renewcommand{\theequation}{A.\arabic{equation}}
\renewcommand{\appendixname}{Appendix~\Alph{section}}

\section{The Proof of Lemma \ref{Lem:D-RIP}}

Inspired by Zhang and Li's work \cite{ZL18}, we provide the proof of Lemma \ref{Lem:D-RIP} as follows.

\begin{proof}[Proof of Lemma \ref{Lem:D-RIP}]
First, we assume that $tk$ is an integer. Let $h=\hat{x}-x_0$, and let $T_0$ and $S_0$ be the index sets of the $k$ largest entries in absolute value of $D^*x_0$ and $D^*h$, respectively. Without loss of generality, we assume that $S_0:=[1,k]$.
Since
\begin{equation*}
\|D^*x_0\|_q^q\ge\|D^*\hat{x}\|_q^q,
\end{equation*}
we have
\begin{eqnarray}
\|D_{T_0}^*x_0\|_q^q+\|D_{T_0^c}^*x_0\|_q^q&=&\|D^*x_0\|_q^q\nonumber\\
&\ge&\|D^*x_0+D^*{h}\|_q^q\nonumber\\
&=&\|D_{T_0}^*x_0+D_{T_0}^*{h}\|_q^q+\|D_{T_0^c}^*x_0+D_{T_0^c}^*{h}\|_q^q\nonumber\\
&\ge&\|D_{T_0}^*x_0\|_q^q-\|D_{T_0}^*{h}\|_q^q+\|D_{T_0^c}^*{h}\|_q^q-\|D_{T_0^c}^*x_0\|_q^q.\label{lem:D1}
\end{eqnarray}
Rewrite the above inequality (\ref{lem:D1}), and we get
\begin{eqnarray}
\|D_{T_0^c}^*{h}\|_q^q&\le&\|D_{T_0}^*{h}\|_q^q+2\|D_{T_0^c}^*{x_0}\|_q^q\nonumber\\
&=&\|D_{T_0}^*{h}\|_q^q+2\sigma_k(D^*{x_0})_q^q.\label{lem:D2}
\end{eqnarray}
Thus, it is obtained that
\begin{eqnarray}
\|D_{S_0^c}^*{h}\|_q^q&\le&\|D_{T_0^c}^*{h}\|_q^q\le\|D_{T_0}^*{h}\|_q^q+2\sigma_k(D^*{x_0})_q^q\nonumber\\
&\le&\|D_{S_0}^*{h}\|_q^q+2\sigma_k(D^*{x_0})_q^q.\label{lem:D3}
\end{eqnarray}
Assume that $\|D_{S_0^c}^*h\|_0=r\ge k$. Choose
\begin{equation}
\alpha^q:=\frac{\|D_{S_0}^*{h}\|_q^q+2\sigma_k(D^*{x_0})_q^q}{k}.\label{lem:Aa1}
\end{equation}
Let $a$ and $b$ be two positive integers satisfying $a+b=tk$ and $b\le a\le k$. Set all the possible index sets $T_i$,\;$S_j\subset S_0=[1:k]$, which satisfy $|T_i|=a\;(i\in I),|S_j|=b\;(j\in J)$ with $|I|=\left(
                                                                                             \begin{array}{c}
                                                                                               k \\
                                                                                               a \\
                                                                                             \end{array}
                                                                                           \right)$,
$|J|=\left(
                                                                                             \begin{array}{c}
                                                                                               k \\
                                                                                               b \\
                                                                                             \end{array}
                                                                                           \right)$,
respectively.
Note that
\[
\|D_{S_0^c}^*{h}\|_q^q\le\|D_{S_0}^*{h}\|_q^q+2\sigma_k(D^*{x_0})_q^q=k\alpha^q=b\cdot \left(\sqrt[q]{\frac{k}{b}}\alpha\right)^q.
\]
Then, we can derive
\begin{equation*}
\|D_{S_0^c}^*{h}\|_{\infty}^q\le\frac{\|D_{S_0}^*{h}\|_q^q}{k}\le\frac{k\alpha^q}{k}=\alpha^q,
\end{equation*}
namely,
\[
\|D_{S_0^c}^*{h}\|_{\infty}\le\alpha\le\sqrt[q]{\frac{k}{b}}\alpha.
\]
Thus, it follows from Lemma \ref{lem2} that $D_{S_0^c}^*{h}$ can be decomposed as
\[
D_{S_0^c}^*{h}=\sum_{l}\lambda_lu_l,
\]
where $u_l$ is $b$-sparse, and
\begin{equation}\label{lem:dd}
\sum_{l}\lambda_l\|u_l\|_2^2\le\frac{k}{b}\alpha^q\|D_{S_0^c}^*{h}\|_{2-q}^{2-q}.
\end{equation}
Similarly, $D_{S_0^c}^*{h}$ can also be represented by
\[
D_{S_0^c}^*{h}=\sum_{l}\lambda_l^\prime v_l,\;D_{S_0^c}^*{h}=\sum_{l}\lambda_l^{\prime\prime} w_l,
\]
where $v_l$ is $a$-sparse, $w_l$ is $(t-1)k$-sparse (for $t>1$), and
\begin{eqnarray}
\sum_{l}\lambda_l^\prime\|v_l\|_2^2\le\frac{k}{a}\alpha^q\|D_{S_0^c}^*{h}\|_{2-q}^{2-q},\;\sum_{l}\lambda_l^{\prime\prime}\|w_l\|_2^2\le\frac{1}{t-1}\alpha^q\|D_{S_0^c}^*{h}\|_{2-q}^{2-q}.\label{lem:D4}
\end{eqnarray}
Note that
\begin{equation}\label{lem:D5}
\|Ah\|_2=\|A\hat{x}-Ax_0\|_2\le\epsilon.
\end{equation}
We have
\begin{eqnarray}\label{lem:D6}
|\langle ADD_{S_0}^*h,Ah\rangle|^2&\le&\|ADD_{S_0}^*h\|_2^2\|Ah\|_2^2\nonumber\\
&\le&\epsilon^2(1+\delta_k)\|DD_{S_0}^*h\|_2^2\nonumber\\
&\le&\epsilon^2(1+\delta_{tk})\|D_{S_0}^*h\|_2^2
\end{eqnarray}
for $t\in[1,\frac{4}{3})$, where we use the fact that if $k\le k_1$, $\delta_k\le\delta_{k_1}$ in the last step;
\begin{eqnarray}\label{lem:D7}
|\langle ADD_{S_0}^*h,Ah\rangle|^2&\le&\epsilon^2(1+\delta_k)\|DD_{S_0}^*h\|_2^2\nonumber\\
&\le&\epsilon^2\left[1+\left(\frac{2}{t}-1\right)\delta_{tk}\right]\|D_{S_0}^*h\|_2^2\nonumber\\
&\le&\frac{(1+\delta_{tk})\epsilon^2}{t}\|D_{S_0}^*h\|_2^2
\end{eqnarray}
for $t\in(0,1)$, where we use Lemma 4.1 proposed in \cite{CZ13} in the last two steps.
Set
\begin{align}
Y_{a,b}&:=\frac{k-b}{a\left(
                                                                                             \begin{array}{c}
                                                                                               k \\
                                                                                               a \\
                                                                                             \end{array}
                                                                                           \right)}
\sum_{i\in I,l}\lambda_l\bigg[a^2\|AD\left(D_{T_i}^*h+\frac{b}{k}u_l\right)\|_2^2+a^2\|D^{\bot}\left(D_{T_i}^*h+\frac{b}{k}u_l\right)\|_2^2\nonumber\\
    &\quad-b^2\|AD\left(D_{T_i}^*h-\frac{a}{k}u_l\right)\|_2^2-b^2\|D^{\bot}\left(D_{T_i}^*h-\frac{a}{k}u_l\right)\|_2^2\bigg]\nonumber\\
    &\quad\quad+\frac{k-a}{b\left(
                                                                                             \begin{array}{c}
                                                                                               k \\
                                                                                               b \\
                                                                                             \end{array}
                                                                                           \right)}
\sum_{j\in J,l}\lambda^{\prime}_l\bigg[b^2\|AD\left(D_{S_j}^*h+\frac{a}{k}v_l\right)\|_2^2+b^2\|D^{\bot}\left(D_{S_j}^*h+\frac{a}{k}v_l\right)\|_2^2\nonumber\\
     &\quad\quad\quad-a^2\|AD\left(D_{S_j}^*h-\frac{b}{k}v_l\right)\|_2^2-a^2\|D^{\bot}\left(D_{S_j}^*h-\frac{b}{k}v_l\right)\|_2^2\bigg].\label{lem:D8}
\end{align}
For $1\le t<\frac{4}{3}$, we can obtain
\begin{align}
&(4-3t)Y_{a,b}-2t^3[ab-(t-1)k^2]\langle ADD_{S_0}^*h,Ah\rangle\nonumber\\
=&C_{t,a,b}\sum_{l}\lambda_l^{\prime\prime}\bigg[\|AD(D_{S_0}^*h+(t-1)w_l)\|_2^2+\|D^{\bot}(D_{S_0}^*h+(t-1)w_l)\|_2^2\nonumber\\
 &\quad-(t-1)^2\|AD(D_{S_0}^*h-w_l)\|_2^2-(t-1)^2\|D^{\bot}(D_{S_0}^*h-w_l)\|_2^2\bigg],\label{lem:D9}
\end{align}
and for $0<t<1$, we can derive
\begin{align}
&tY_{a,b}-2t^2(2-t)ab\langle ADD_{S_0}^*h,Ah\rangle\nonumber\\
=&\frac{C_{t,a,b}}{\left(
                                                                                             \begin{array}{c}
                                                                                               k \\
                                                                                               a \\
                                                                                             \end{array}
                                                                                           \right)
                                                                                           \left(
                                                                                             \begin{array}{c}
                                                                                               k-a \\
                                                                                               b \\
                                                                                             \end{array}
                                                                                           \right)}
\sum_{T_i\cap S_j=\emptyset}\bigg[\|AD\left(D_{T_i}^*h+D_{S_j}^*h\right)\|_2^2+\|D^{\bot}\left(D_{T_i}^*h+D_{S_j}^*h\right)\|_2^2\nonumber\\
&\quad-\frac{1-t}{ab}\|AD\left(bD_{T_i}^*h-aD_{S_j}^*h\right)\|_2^2-\frac{1-t}{ab}\|D^{\bot}\left(bD_{T_i}^*h-aD_{S_j}^*h\right)\|_2^2\bigg],\label{lem:D10}
\end{align}
where it is due to the fact that $\langle D^{\bot}D_{S_0}^*h,D^{\bot}D^*h\rangle=0$ and $C_{t,a,b}:=(a-b)^2-2(2-t)ab$. The proofs of the above two inequalities are similar to those of (14) and (15) in \cite{ZL18}, and therefore, we leave out all the details for simplicity. When $tk$ is even, let $a=b=\frac{tk}{2}$; when $tk$ is odd, let $a=b+1=\frac{tk+1}{2}$. We can easily prove $C_{t,a,b}<0$ for both cases.

Since $D_{T_i}^*,\;v_l$ are $a$-sparse, $D_{S_j}^*,\;u_l$ are $b$-sparse with $a+b=tk$, we can apply the D-RIP of order $tk$ to $Y_{a,b}$, which yields
\begin{align}
&Y_{a,b}\nonumber\\
\ge&\frac{k-b}{a\left(
                                                                                             \begin{array}{c}
                                                                                               k \\
                                                                                               a \\
                                                                                             \end{array}
                                                                                           \right)}
\sum_{i\in I,l}\lambda_l\bigg[a^2(1-\delta_{tk})\|D_{T_i}^*h+\frac{b}{k}u_l\|_2^2-b^2(1+\delta_{tk})\|D_{T_i}^*h-\frac{a}{k}u_l\|_2^2\bigg]\nonumber\\
&\quad+\frac{k-a}{b\left(
                                                                                             \begin{array}{c}
                                                                                               k \\
                                                                                               b \\
                                                                                             \end{array}
                                                                                           \right)}
\sum_{j\in J,l}\lambda^{\prime}_l\bigg[b^2(1-\delta_{tk})\|D_{S_j}^*h+\frac{a}{k}v_l\|_2^2-a^2(1+\delta_{tk})\|D_{S_j}^*h-\frac{b}{k}v_l\|_2^2\bigg]\nonumber\\
=&\frac{k-b}{a\left(
                                                                                             \begin{array}{c}
                                                                                               k \\
                                                                                               a \\
                                                                                             \end{array}
                                                                                           \right)}
\sum_{i\in I,l}\lambda_l\bigg[a^2(1-\delta_{tk})\bigg(\|D_{T_i}^*h\|_2^2+\frac{b^2}{k^2}\|u_l\|_2^2\bigg)-b^2(1+\delta_{tk})\bigg(\|D_{T_i}^*h\|_2^2+\frac{a^2}{k^2}\|u_l\|_2^2\bigg)\bigg]\nonumber\\
&\quad+\frac{k-a}{b\left(
                                                                                             \begin{array}{c}
                                                                                               k \\
                                                                                               b \\
                                                                                             \end{array}
                                                                                           \right)}
\sum_{j\in J,l}\lambda^{\prime}_l\bigg[b^2(1-\delta_{tk})\bigg(\|D_{S_j}^*h\|_2^2+\frac{a^2}{k^2}\|v_l\|_2^2\bigg)\nonumber\\
&\quad\quad-a^2(1+\delta_{tk})\bigg(\|D_{S_j}^*h\|_2^2+\frac{b^2}{k^2}\|v_l\|_2^2\bigg)\bigg]\nonumber\\
=&(a^2-b^2)\Bigg[\frac{k-b}{a\left(
                                                                                             \begin{array}{c}
                                                                                               k \\
                                                                                               a \\
                                                                                             \end{array}
                                                                                           \right)}
\sum_{i\in I}\|D_{T_i}^*h\|_2^2-\frac{k-a}{b\left(
                                                                                             \begin{array}{c}
                                                                                               k \\
                                                                                               b \\
                                                                                             \end{array}
                                                                                           \right)}
\sum_{j\in J}\|D_{S_j}^*h\|_2^2\Bigg]\nonumber\\
&\quad-(a^2+b^2)\delta_{tk}\Bigg[\frac{k-b}{a\left(
                                                                                             \begin{array}{c}
                                                                                               k \\
                                                                                               a \\
                                                                                             \end{array}
                                                                                           \right)}
\sum_{i\in I}\|D_{T_i}^*h\|_2^2
+\frac{k-a}{b\left(
                                                                                             \begin{array}{c}
                                                                                               k \\
                                                                                               b \\
                                                                                             \end{array}
                                                                                           \right)}
\sum_{j\in J}\|D_{S_j}^*h\|_2^2\Bigg]\nonumber\\
&\quad\quad-\frac{2\delta_{tk}ab}{k^2}\bigg[b(k-b)\sum_{l}\lambda_l\|u_l\|_2^2+a(k-a)\sum_{l}\lambda_l^{\prime}\|v_l\|_2^2\bigg]\nonumber\\
\ge&(a^2-b^2)\Bigg[\frac{k-b}{a\left(
                                                                                             \begin{array}{c}
                                                                                               k \\
                                                                                               a \\
                                                                                             \end{array}
                                                                                           \right)}
\left(
                                                                                             \begin{array}{c}
                                                                                               k-1 \\
                                                                                               a-1 \\
                                                                                             \end{array}
                                                                                           \right)
\|D_{S_0}^*h\|_2^2-\frac{k-a}{b\left(
                                                                                             \begin{array}{c}
                                                                                               k \\
                                                                                               b \\
                                                                                             \end{array}
                                                                                           \right)}
                                                                                           \left(
                                                                                             \begin{array}{c}
                                                                                               k-1 \\
                                                                                               b-1 \\
                                                                                             \end{array}
                                                                                           \right)
\|D_{S_0}^*h\|_2^2\Bigg]\nonumber\\
&\quad-(a^2+b^2)\delta_{tk}\Bigg[\frac{k-b}{a\left(
                                                                                             \begin{array}{c}
                                                                                               k \\
                                                                                               a \\
                                                                                             \end{array}
                                                                                           \right)}
                                                                                           \left(
                                                                                             \begin{array}{c}
                                                                                               k-1 \\
                                                                                               a-1 \\
                                                                                             \end{array}
                                                                                           \right)
\|D_{S_0}^*h\|_2^2-\frac{k-a}{b\left(
                                                                                             \begin{array}{c}
                                                                                               k \\
                                                                                               b \\
                                                                                             \end{array}
                                                                                           \right)}
                                                                                           \left(
                                                                                             \begin{array}{c}
                                                                                               k-1 \\
                                                                                               b-1 \\
                                                                                             \end{array}
                                                                                           \right)
\|D_{S_0}^*h\|_2^2\Bigg]\nonumber\\
&\quad\quad-\frac{2\delta_{tk}ab}{k^2}\bigg[b(k-b)\frac{k}{b}\alpha^q\|D_{S_0^c}^*h\|_{2-q}^{2-q}+a(k-a)\frac{k}{a}\alpha^q\|D_{S_0^c}^*h\|_{2-q}^{2-q}\bigg]\nonumber\\
=&[t(a-b)^2-(2-t)(a^2+b^2)\delta_{tk}]\|D_{S_0}^*h\|_2^2-2(2-t)\delta_{tk}ab\alpha^q\|D_{S_0^c}^*h\|_{2-q}^{2-q},\label{lem:D11}
\end{align}
where we use (\ref{lem1:1}), (\ref{lem:dd}) and (\ref{lem:D4}) in the penultimate step.
By the H\"{o}lder's inequality, (\ref{lem:D3}) and (\ref{lem:Aa1}), we have
\begin{eqnarray}
\alpha^q\|D_{S_0^c}^*h\|_{2-q}^{2-q}
&=&\alpha^q\sum_{i\in S_0^c}|D^*h(i)|^{\frac{2(2-2q)}{2-q}}|D^*h(i)|^{\frac{q^2}{2-q}}\nonumber\\
&\le&\alpha^q\left(\sum_{i\in S_0^c}|D^*h(i)|^{2}\right)^{\frac{2-2q}{2-q}}\left(\sum_{i\in S_0^c}|D^*h(i)|^q\right)^{\frac{q}{2-q}}\nonumber\\
&=&\alpha^q\left(\|D_{S_0^c}^*h\|_{2}^2\right)^{\frac{2-2q}{2-q}}\left(\|D_{S_0^c}^*h\|_{q}^q\right)^{\frac{q}{2-q}}\nonumber\\
&\le&\alpha^q\left(\|D_{S_0^c}^*h\|_{2}^2\right)^{\frac{2-2q}{2-q}}(k\alpha^q)^{\frac{q}{2-q}}\nonumber\\
&=&\left(\|D_{S_0^c}^*h\|_{2}^2\right)^{\frac{2-2q}{2-q}}\left(\frac{k\alpha^q}{k^{1-\frac{q}{2}}}\right)^{\frac{2}{2-q}}\nonumber\\
&=&\left(\|D_{S_0^c}^*h\|_{2}^2\right)^{\frac{2-2q}{2-q}}\left(\frac{\|D_{S_0}^*{h}\|_q^q+2\sigma_k(D^*{x_0})_q^q}{k^{1-\frac{q}{2}}}\right)^{\frac{2}{2-q}}\nonumber\\
&\le&\left(\|D_{S_0^c}^*h\|_{2}^2\right)^{\frac{2-2q}{2-q}}\left(\|D_{S_0}^*{h}\|_2^q+\frac{2\sigma_k(D^*{x_0})_q^q}{k^{1-\frac{q}{2}}}\right)^{\frac{2}{2-q}}.\label{lem:D12}
\end{eqnarray}
Combining (\ref{lem:D3}) and Lemma~\ref{Lem:Add1}, we derive
\begin{eqnarray}
\|D_{S_0^c}^*h\|_{2}^2&\le& k\left[\left(\frac{\|D_{S_0}^*{h}\|_2^2}{k}\right)^{\frac{q}{2}}+\frac{2\sigma_k(D^*{x_0})_q^q}{k}\right]^{\frac{2}{q}}\nonumber\\
&=&\left(\|D_{S_0}^*{h}\|_2^q+\frac{2\sigma_k(D^*{x_0})_q^q}{k^{1-\frac{q}{2}}}\right)^{\frac{2}{q}}.\label{lem:A12}
\end{eqnarray}
Substituting (\ref{lem:A12}) into (\ref{lem:D12}), we obtain
\begin{eqnarray}
\alpha^q\|D_{S_0^c}^*h\|_{2-q}^{2-q}
&\le&\left(\|D_{S_0}^*{h}\|_2^q+\frac{2\sigma_k(D^*{x_0})_q^q}{k^{1-\frac{q}{2}}}\right)^{\frac{2}{q}}\nonumber\\
&\le&\left[2^{\frac{1}{q}-1}\bigg(\|D_{S_0}^*h\|_2+\frac{2^{\frac{1}{q}}\sigma_k(D^*x_0)_q}{k^{\frac{1}{q}-\frac{1}{2}}}\bigg)\right]^2\nonumber\\
&=&2^{\frac{2}{q}-2}\|D_{S_0}^*h\|_2^2+2^{\frac{3}{q}-1}\|D_{S_0}^*h\|_2R+2^{\frac{4}{q}-2}R^2,\label{lem:A13}
\end{eqnarray}
where $$R:=\frac{\sigma_k(D^*{x_0})_q}{k^{\frac{1}{q}-\frac{1}{2}}}.$$
For brevity, set
\begin{eqnarray*}
\mathcal{A}&:=&(4-3t)Y_{a,b}-2t^3\big[ab-(t-1)k^2\big]\langle ADD_{S_0}^*h,Ah\rangle,\nonumber\\
\mathcal{B}&:=&C_{t,a,b}\sum_{l}\lambda_l^{\prime\prime}\bigg[\|AD(D_{S_0}^*h+(t-1)w_l)\|_2^2+\|D^{\bot}(D_{S_0}^*h+(t-1)w_l)\|_2^2{}\nonumber\\
{}&&\quad-(t-1)^2\|AD(D_{S_0}^*h-w_l)\|_2^2-(t-1)^2\|D^{\bot}(D_{S_0}^*h-w_l)\|_2^2\bigg],\nonumber\\
\mathcal{C}&:=&tY_{a,b}-2t^2(2-t)ab\langle ADD_{S_0}^*h,Ah\rangle,\nonumber\\
\mathcal{D}&:=&\frac{C_{t,a,b}}{\left(
                                                                                             \begin{array}{c}
                                                                                               k \\
                                                                                               a \\
                                                                                             \end{array}
                                                                                           \right)
                                                                                           \left(
                                                                                             \begin{array}{c}
                                                                                               k-a \\
                                                                                               b \\
                                                                                             \end{array}
                                                                                           \right)}
\sum_{T_i\cap S_j=\emptyset}\bigg[\|AD\left(D_{T_i}^*h+D_{S_j}^*h\right)\|_2^2+\|D^{\bot}\left(D_{T_i}^*h+D_{S_j}^*h\right)\|_2^2{}\nonumber\\
{}&&-\frac{1-t}{ab}\|AD\left(bD_{T_i}^*h-aD_{S_j}^*h\right)\|_2^2-\frac{1-t}{ab}\|D^{\bot}\left(bD_{T_i}^*h-aD_{S_j}^*h\right)\|_2^2\bigg].\nonumber\\
\end{eqnarray*}

When $t\in[1,\frac{4}{3})$, combining (\ref{lem:D6}) and (\ref{lem:D11}), we get
\begin{align}
\mathcal{A}\ge&(4-3t)\Big\{\big[t(a-b)^2-(2-t)(a^2+b^2)\delta_{tk}\big]\|D_{S_0}^*h\|_2^2-2(2-t)\delta_{tk}ab\alpha^q\|D_{S_0^c}^*h\|_{2-q}^{2-q}\Big\}{}\nonumber\\
&-2t^3\big[ab-(t-1)k^2\big]\cdot\epsilon\sqrt{1+\delta_{tk}}\|D_{S_0}^*h\|_2.\label{lem:D15}
\end{align}
Since $D_{S_0}^*h$ is $k$-sparse and $w_l$ is $(t-1)k$-sparse, applying the D-RIP of order $tk$, we obtain
\begin{eqnarray}
\mathcal{B}&\le& C_{t,a,b}\sum_{l}\lambda_l^{\prime\prime}\Big[(1-\delta_{tk})\|D_{S_0}^*h+(t-1)w_l\|_2^2-(t-1)^2(1+\delta_{tk})\|D_{S_0}^*h-w_l\|_2^2\Big]\nonumber\\
&\le&C_{t,a,b}\Big\{\big[1-\delta_{tk}-(t-1)^2(1+\delta_{tk})\big]\|D_{S_0}^*h\|_2^2-2(t-1)^2\delta_{tk}\sum_{l}\lambda_l^{\prime\prime}\|w_l\|_2^2\Big\}\nonumber\\
&\le&C_{t,a,b}\Big\{\big[1-\delta_{tk}-(t-1)^2(1+\delta_{tk})\big]\|D_{S_0}^*h\|_2^2-2(t-1)\delta_{tk}\alpha^q\|D_{S_0^c}^*h\|_{2-q}^{2-q}\Big\},\label{lem:D14}
\end{eqnarray}
where we use (\ref{lem:D4}) in the last step.
Thus, combining (\ref{lem:A13}), (\ref{lem:D15}) and (\ref{lem:D14}), we have
\begin{eqnarray*}
0&\ge&(4-3t)\Big\{\big[t(a-b)^2-(2-t)(a^2+b^2)\delta_{tk}\big]\|D_{S_0}^*h\|_2^2-2(2-t)\delta_{tk}ab\alpha^q\|D_{S_0^c}^*h\|_{2-q}^{2-q}\Big\}{}\nonumber\\
{}&&-2t^3[ab-(t-1)k^2\big]\epsilon\sqrt{1+\delta_{tk}}\|D_{S_0}^*h\|_2{}\nonumber\\
{}&&\quad-C_{t,a,b}\Big\{\big[1-\delta_{tk}-(t-1)^2(1+\delta_{tk})\big]\|D_{S_0}^*h\|_2^2
-2(t-1)\delta_{tk}\alpha^q\|D_{S_0^c}^*h\|_{2-q}^{2-q}\Big\}\nonumber\\
&=&2t^2[ab-(t-1)k^2\big]\left\{\big[t-(3-t)\delta_{tk}\big]\|D_{S_0}^*h\|_2^2-\delta_{tk}\alpha^q\|D_{S_0^c}^*h\|_{2-q}^{2-q}\right\}{}\nonumber\\
{}&&-2t^3\big[ab-(t-1)k^2\big]\epsilon\sqrt{1+\delta_{tk}}\|D_{S_0}^*h\|_2\nonumber\\
&\ge&2t^2\big[ab-(t-1)k^2\big]\left\{\big[t-(3-t)\delta_{tk}\big]\|D_{S_0}^*h\|_2^2-\delta_{tk}\Big(2^{\frac{2}{q}-2}\|D_{S_0}^*h\|_2^2\right.{}\nonumber\\
{}&&\left.+2^{\frac{3}{q}-1}\|D_{S_0}^*h\|_2R+2^{\frac{4}{q}-2}R^2\Big)\right\}
-2t^3\big[ab-(t-1)k^2\big]\epsilon\sqrt{1+\delta_{tk}}\|D_{S_0}^*h\|_2\nonumber\\
&=&2t^2\big[ab-(t-1)k^2\big]\left\{\big[t-(3+2^{\frac{2}{q}-2}-t)\delta_{tk}\big]\|D_{S_0}^*h\|_2^2-\Big(2^{\frac{3}{q}-1}\delta_{tk}R
\right.{}\nonumber\\
{}&&\left.+t\epsilon\sqrt{1+\delta_{tk}}\Big)\|D_{S_0}^*h\|_2-2^{\frac{4}{q}-2}\delta_{tk}R^2\right\},
\end{eqnarray*}
which is a second-order inequality for $\|D_{S_0}^*h\|_2$.
Notice that $ab\ge\frac{(tk)^2-1}{4}=\frac{(2-t)^2k^2}{4}-\frac{1}{4}+(t-1)k^2>(t-1)k^2$. Since $\delta_{tk}<\frac{t}{3+2^{\frac{2}{q}-2}-t}$, we can derive
\begin{align}
&\|D_{S_0}^*h\|_2\nonumber\\
\le&\frac{2^{\frac{3}{q}-1}\delta_{tk}R+t\epsilon\sqrt{1+\delta_{tk}}}{2\Big[t-\big(3+2^{\frac{2}{q}-2}-t\big)\delta_{tk}\Big]}\nonumber\\
&\quad+\frac{\sqrt{\Big(2^{\frac{3}{q}-1}\delta_{tk}R+t\epsilon\sqrt{1+\delta_{tk}}\Big)^2
   +4\Big[t-\big(3+2^{\frac{2}{q}-2}-t\big)\delta_{tk}\Big]2^{\frac{4}{q}-2}\delta_{tk}R^2}}{2\Big[t-\big(3+2^{\frac{2}{q}-2}-t\big)\delta_{tk}\Big]}\nonumber\\
\le&\frac{t\sqrt{1+\delta_{tk}}}{t-\big(3+2^{\frac{2}{q}-2}-t\big)\delta_{tk}}\epsilon+\frac{2^{\frac{3}{q}-1}\delta_{tk}+2^{\frac{2}{q}-1}
   \sqrt{\Big[t-\big(3+2^{\frac{2}{q}-2}-t\big)\delta_{tk}\Big]\delta_{tk}}}{t-\big(3+2^{\frac{2}{q}-2}-t\big)\delta_{tk}}R.
   \label{lem:D17}
\end{align}

When $t\in(0,1)$, combining (\ref{lem:D7}) and (\ref{lem:D11}), we obtain
\begin{eqnarray}\label{lem:D19}
\mathcal{C}&\ge& t\left\{\big[t(a-b)^2-(2-t)(a^2+b^2)\delta_{tk}\big]\|D_{S_0}^*h\|_2^2-2(2-t)\delta_{tk}ab\alpha^q\|D_{S_0^c}^*h\|_{2-q}^{2-q}\right\}{}\nonumber\\
{}&&-2t(2-t)ab\epsilon\sqrt{(1+\delta_{tk})t}\|D_{S_0}^*h\|_{2}.
\end{eqnarray}
Since $D_{T_i}^*h$ is $a$-sparse and $D_{S_j}^*h$ is $b$-sparse, applying the D-RIP of order $tk$ and (\ref{lem1:1}), we have
\begin{align}
\mathcal{D}
\le&\frac{C_{t,a,b}}{\left(
                                                                                             \begin{array}{c}
                                                                                               k \\
                                                                                               a \\
                                                                                             \end{array}
                                                                                           \right)
                                                                                           \left(
                                                                                             \begin{array}{c}
                                                                                               k-a \\
                                                                                               b \\
                                                                                             \end{array}
                                                                                           \right)}
\sum_{T_i\cap S_j=\emptyset}\bigg[(1-\delta_{tk})\|D_{T_i}^*h+D_{S_j}^*h\|_2^2\nonumber\\
&\quad-\frac{1-t}{ab}(1+\delta_{tk})\|bD_{T_i}^*h-aD_{S_j}^*h\|_2^2\bigg]\nonumber\\
=&\frac{C_{t,a,b}}{\left(
                                                                                             \begin{array}{c}
                                                                                               k \\
                                                                                               a \\
                                                                                             \end{array}
                                                                                           \right)
                                                                                           \left(
                                                                                             \begin{array}{c}
                                                                                               k-a \\
                                                                                               b \\
                                                                                             \end{array}
                                                                                           \right)}
\bigg[(1-\delta_{tk})\sum_{T_i\cap S_j=\emptyset}(\|D_{T_i}^*h\|_2^2+\|D_{S_j}^*h\|_2^2)\nonumber\\
&\quad-\frac{1-t}{ab}(1+\delta_{tk})\sum_{T_i\cap S_j=\emptyset}(b^2\|D_{T_i}^*h\|_2^2+a^2\|D_{S_j}^*h\|_2^2)\bigg]\nonumber\\
=&\frac{C_{t,a,b}}{\left(
                                                                                             \begin{array}{c}
                                                                                               k \\
                                                                                               a \\
                                                                                             \end{array}
                                                                                           \right)
                                                                                           \left(
                                                                                             \begin{array}{c}
                                                                                               k-a \\
                                                                                               b \\
                                                                                             \end{array}
                                                                                           \right)}
\Bigg\{(1-\delta_{tk})\bigg[\left(
                                                                                             \begin{array}{c}
                                                                                               k-a \\
                                                                                               b \\
                                                                                             \end{array}
                                                                                           \right)
                                                                                           \left(
                                                                                             \begin{array}{c}
                                                                                               k-1 \\
                                                                                               a-1 \\
                                                                                             \end{array}
                                                                                           \right)
\|D_{S_0}^*h\|_2^2\nonumber\\
&\quad+\left(
                                                                                             \begin{array}{c}
                                                                                              k-b \\
                                                                                               a \\
                                                                                             \end{array}
                                                                                           \right)
                                                                                           \left(
                                                                                             \begin{array}{c}
                                                                                               k-1 \\
                                                                                               b-1 \\
                                                                                             \end{array}
                                                                                           \right)
\|D_{S_0}^*h\|_2^2\bigg]-\frac{1-t}{ab}(1+\delta_{tk})\bigg[b^2\left(
                                                                                             \begin{array}{c}
                                                                                               k-a \\
                                                                                               b \\
                                                                                             \end{array}
                                                                                           \right)
                                                                                           \left(
                                                                                             \begin{array}{c}
                                                                                               k-1 \\
                                                                                               a-1 \\
                                                                                             \end{array}
                                                                                           \right)\nonumber\\
&\quad\quad\times\|D_{S_0}^*h\|_2^2
+a^2\left(
                                                                                             \begin{array}{c}
                                                                                               k-b \\
                                                                                               a \\
                                                                                             \end{array}
                                                                                           \right)
                                                                                           \left(
                                                                                             \begin{array}{c}
                                                                                               k-1 \\
                                                                                               b-1 \\
                                                                                             \end{array}
                                                                                           \right)
\|D_{S_0}^*h\|_2^2\bigg]\Bigg\}\nonumber\\
\le& C_{t,a,b}t\big[t-(2-t)\delta_{tk}\big]\|D_{S_0}^*h\|_2^2.\label{lem:D18}
\end{align}
Therefore, combining (\ref{lem:A13}), (\ref{lem:D19}), and (\ref{lem:D18}), we have
\begin{eqnarray}
0&\ge&t\left\{\big[t(a-b)^2-(2-t)(a^2+b^2)\delta_{tk}\big]\|D_{S_0}^*h\|_2^2-2(2-t)\delta_{tk}ab\alpha^q\|D_{S_0^c}^*h\|_{2-q}^{2-q}\right\}{}\nonumber\\
{}&&-2t(2-t)ab\epsilon\sqrt{(1+\delta_{tk})t}\|D_{S_0}^*h\|_{2}-C_{t,a,b}t\big[t-(2-t)\delta_{tk}\big]\|D_{S_0}^*h\|_2^2\nonumber\\
&=&2t(2-t)ab\left\{\big[t-(3-t)\delta_{tk}\big]\|D_{S_0}^*h\|_2^2-\delta_{tk}\alpha^q\|D_{S_0^c}^*h\|_{2-q}^{2-q}-\epsilon\sqrt{(1+\delta_{tk})t}\|D_{S_0}^*h\|_{2}\right\}\nonumber\\
&\ge&2t(2-t)ab\Big\{\big[t-(3-t)\delta_{tk}\big]\|D_{S_0}^*h\|_2^2-\delta_{tk}\Big(2^{\frac{2}{q}-2}\|D_{S_0}^*h\|_2^2+2^{\frac{3}{q}-1}\|D_{S_0}^*h\|_2R{}\nonumber\\
{}&&+2^{\frac{4}{q}-2}R^2\Big)-\epsilon\sqrt{(1+\delta_{tk})t}\|D_{S_0}^*h\|_{2}\Big\}\nonumber\\
&=&2t(2-t)ab\left\{\big[t-(3+2^{\frac{2}{q}-2}-t)\delta_{tk}\big]\|D_{S_0}^*h\|_2^2-\big[\epsilon\sqrt{(1+\delta_{tk})t}+2^{\frac{3}{q}-1}\delta_{tk}R\big]\|D_{S_0}^*h\|_2\right.{}\nonumber\\
{}&&\left.-\delta_{tk}2^{\frac{4}{q}-2}R^2\right\}.\label{lem:D20}
\end{eqnarray}
Since $\delta_{tk}<\frac{t}{3+2^{\frac{2}{q}-2}-t}$, we have
\begin{align}
&\|D_{S_0}^*h\|_2\nonumber\\
\le&\frac{\epsilon\sqrt{(1+\delta_{tk})t}+2^{\frac{3}{q}-1}\delta_{tk}R}{2\Big[t-\big(3+2^{\frac{2}{q}-2}-t\big)\delta_{tk}\Big]}\nonumber\\
&\quad+\frac{\sqrt{\Big[\epsilon\sqrt{(1+\delta_{tk})t}+2^{\frac{3}{q}-1}\delta_{tk}R\Big]^2
+4\Big[t-\big(3+2^{\frac{2}{q}-2}-t\big)\delta_{tk}\Big]\delta_{tk}2^{\frac{4}{q}-2}R^2}}{2\Big[t-\big(3+2^{\frac{2}{q}-2}-t\big)\delta_{tk}\Big]}\nonumber\\
\le&\frac{\sqrt{(1+\delta_{tk})t}}{t-\big(3+2^{\frac{2}{q}-2}-t\big)\delta_{tk}}\epsilon+\frac{2^{\frac{3}{q}-1}\delta_{tk}
+2^{\frac{2}{q}-1}\sqrt{\Big[t-(3+2^{\frac{2}{q}-2}-t)\delta_{tk}\Big]\delta_{tk}}}{t-\big(3+2^{\frac{2}{q}-2}-t\big)\delta_{tk}}R.\label{lem:D21}
\end{align}
Combining (\ref{lem:D17}) and (\ref{lem:D21}), we know that
\begin{equation}\label{lem:D22}
\|D_{S_0}^*h\|_2
\le\frac{\tilde{t}\sqrt{1+\delta_{tk}}}{t-\big(3+2^{\frac{2}{q}-2}-t\big)\delta_{tk}}\epsilon+\frac{2^{\frac{3}{q}-1}\delta_{tk}
+2^{\frac{2}{q}-1}\sqrt{\Big[t-\big(3+2^{\frac{2}{q}-2}-t\big)\delta_{tk}\Big]\delta_{tk}}}{t-\big(3+2^{\frac{2}{q}-2}-t\big)\delta_{tk}}R,
\end{equation}
where $\tilde{t}=\max\{t,\sqrt{t}\}$.
Using (\ref{lem:A12}) and (\ref{lem:D22}), we obtain
\begin{align}
&\|h\|_2=\|D^*h\|_2\le\|D_{S_0}^*h\|_2+\|D_{S_0^c}^*h\|_2\nonumber\\
\le&(1+2^{\frac{1}{q}-1})\|D_{S_0}^*h\|_2+2^{\frac{2}{q}-1}R\nonumber\\
\le&\frac{(1+2^{\frac{1}{q}-1})\tilde{t}\sqrt{1+\delta_{tk}}}{t-\big(3+2^{\frac{2}{q}-2}-t\big)\delta_{tk}}\epsilon
   +\left(\frac{(1+2^{\frac{1}{q}-1})\left\{2^{\frac{1}{q}}\delta_{tk}+\sqrt{\left[t-\big(3+2^{\frac{2}{q}-2}-t\big)\delta_{tk}\right]\delta_{tk}}\right\}}{t-\big(3+2^{\frac{2}{q}-2}-t\big)\delta_{tk}}
     +1\right)\nonumber\\
&\quad\times\frac{2^{\frac{2}{q}-1}\sigma_{k}(D^*x_0)_q}{k^{\frac{1}{q}-\frac{1}{2}}}.
\end{align}

If $tk$ is not an integer, set $t^{\prime}:=\frac{\lceil tk\rceil}{k}$, and then we can prove the above result by working on $\delta_{t^{\prime}k}$.

\end{proof}
\end{appendix}

\end{document}